\def\G{\Gamma}
\def\FF{\mathbb F}
\def\del{\partial}
\def\SS{\mathbb S}
\def\Z{\mathbb Z}
\def\T{\mathbb T}
\def\R{\mathbb R}
\def\C{\mathbb C}
\def\Q{\mathbb Q}
\def\TTheta{{\mathbb T}_{\Theta}}
\def\sT{\mathscr T}
\def\B{\mathscr B}
\def\bG{\bar \Gamma}
\def\t{\tau}
\def\H{\mathscr{H}}
\def\Gpd{\mathcal G}
\def\Tdeg{T_{\rm deg}}
\newtheorem{thm}{Theorem}[section]
\newtheorem{prop}[thm]{Proposition}
\newtheorem{cor}[thm]{Corollary}
\theoremstyle{definition}
\newtheorem{df}[thm]{Definition}
\newtheorem{rmk}[thm]{Remark}
\def\BTheta{\B_{\Theta}}
\newcommand{\cev}[1]{\stackrel{\leftarrow}{#1}}
\renewcommand{\vec}[1]{\stackrel{\rightarrow}{#1}}
\accentedsymbol{\dbarG}{\Bar{\Bar{\Gamma}}}
\begin{document}

\title[Geometry of the momentum space]
{Geometry of the momentum space: From wire networks to quivers and monopoles}

\author
[Ralph M.\ Kaufmann]{Ralph M.\ Kaufmann}
\email{rkaufman@math.purdue.edu}

\address{Department of Mathematics, Purdue University,
 West Lafayette, IN 47907}

\author
[Sergei Khlebnikov]{Sergei Khlebnikov}
\email{skhleb@physics.purdue.edu}

\address{Department of  Physics, Purdue University,
 West Lafayette, IN 47907}

\author
[Birgit Kaufmann]{Birgit Wehefritz--Kaufmann}
\email{ebkaufma@math.purdue.edu}

\address{Department of Mathematics and Department of Physics, Purdue University, West Lafayette, IN 47907}

\begin{abstract} A new nano--material in the form of a double gyroid
has motivated us to study (non)--commutative $C^*$ geometry
of periodic wire networks and the associated graph Hamiltonians. Here
we present the general abstract framework, which is given by certain quiver representations,
with special attention to the original case of the gyroid as well as related cases,
such as  graphene.
In these geometric situations, the non--commutativity is introduced by a constant magnetic field
and the theory splits into two pieces: commutative and non--commutative, both of which are governed by a $C^*$ geometry.

In the non--commutative case, we can use tools such as K--theory to make statements about the band structure.
In the commutative case,
we give geometric and algebraic methods to study band intersections;
these methods come from singularity theory and
representation theory. We also provide new tools in the study, using $K$--theory and Chern classes. The latter can be computed using Berry connection in the momentum space.
This brings monopole charges and issues of topological stability into the picture.

\end{abstract}

\maketitle


\section{Introduction}
Recently, a new nano--material in the form of a double gyroid has been synthesized
\cite{Hillhouse}. It is based on a thickened triply-periodic minimal surface, whose
complement consists of two non-intersecting channels. These can be filled with conducting
or semiconducting materials \cite{Hillhouse}
to function as nanowire networks with potentially useful
electronic properties \cite{Khlebnikov&Hillhouse}.
The nontrivial topology of such a network has motivated our study
of its commutative and non--commutative geometry \cite{kkwk}. Following
Bellissard and Connes \cite{B,Connes, MM}, we proceed by indentifying the relevant
$C^*$--algebra, which in our case is spanned by the symmetries and the tight-binding (Harper)
Hamiltonian of the skeletal graph obtained
as a deformation retract of the channel; we call it the Bellissard-Harper algebra.
This approach leads to an effective  geometry described by a family of finite dimensional
Hamiltonians and their spectra; the latter determine the band structure of the
original nanostructured solid in the tight-binding approximation.
By placing the material into an external constant magnetic field the geometry is rendered
noncommutative.

In this paper, we generalize that setup to further noncommutative geometries obtained via  certain quiver  representations.
We also adapt the techniques of \cite{kkwk3} and \cite{kkwk4} to this more general situation. In particular, in the commutative case, we get a classification of singularities in
the spectrum---the band intersections.
The simplest of these is a conical intersection of two bands, commonly referred to as a Dirac
point. We give analytic tools to compute locations and properties of the singular points.

In the general framework above, we also give a new interpretation of the Berry phase phenomenon \cite{Berry} in terms of $K$--theory and Chern--classes
generalizing the observations of Thouless {\em et al.} (TKNN) \cite{TKNN} and Simon
\cite{simon}.
These concepts include topological charges in various guises: scalar, $K$--theoretic
and cohomological.
When the parameter space is three-dimensional,
isolated conical degeneracies are magnetic monopoles in the parameter space \cite{Berry}.
In the present case, the parameters are components of the crystal momentum ${\bf k}$;
their number equals the dimensionality
of the original periodic structure.
Thus, in three spatial dimensions---the case of the gyroid---Dirac
points are monopoles in the momentum space and, as we will see, are stable with respect to
small deformations of the graph Hamiltonian.
Furthermore, using foliations, we consider a slicing technique which leads to an
effective numerical tool for finding singular points in the spectrum, generalizing the
method used for this purpose in \cite{Xu&al}. This technique has been implemented
in \cite{kkwk5} and corroborates the topological stability of the gyroid's
Dirac points.
This stability is not a common characteristic of
all Dirac points: those of graphene, which is described by the honeycomb lattice, do not
exhibit this property, see e.g.\ \cite{Fefferman}\footnote{That article also
explores deformation direction where the Dirac points do stay stable.}.
This fact has an elegant and short explanation in our approach.
We expect that this analysis will contribute to understanding of
potential applications of gyroid-based nanomaterials, as well as to the theory of
three-dimensional generalizations of the quantum Hall effect, along the lines of \cite{BE}.
In two dimensions, the TKNN equations for generalized Dirac--Harper operators have
been worked out in \cite{Landi}.
Analyses of higher-dimensional situations are contained in
\cite{Demikhovskii, Goldman, Bernevig, Koshino, Goryo}.

Even without going to complete generality provided by quiver representations, our approach
to studying wire networks is not restricted to the gyroid system
and applies to any embedded periodic wire network in $\R^n$.
We have already used it to study more examples, namely, Bravais lattices, the honeycomb lattice and two other triply periodic surfaces and their wire networks, the primitive cubic (P surface) and the diamond (D surface). We refer to these as the geometric examples.
We recall some results here and include a new consideration of the topological charges.
In  these cases the noncommutative geometry is given by
a subalgebra of a matrix algebra with coefficients in the noncommutative torus. Here the parameters of the torus correspond to the $B$--field that the material is subjected to.

One surprising fact is that  some properties of the non--commutative situation are similar  to the situation without a magnetic field, and there is evidence for duality
 between these two situations.
The duality concerns the degenerate subspaces of the torus that appears as the
relevant moduli space in both cases.
In the commutative case, i.e. in the absence of a magnetic field, the torus is the base for the family of Hamiltonians and the requsite subspace is where the spectrum of the Hamiltonian has degeneracies. In the noncommutative case, the same torus parameterizes the $B$--field and the locus of degeneracy is that of those values of $B$ where the Bellissard--Harper algebra is not the full matrix algebra.

The paper is organized as follows: we start with a description of the material and its underlying geometry in Chapter 2. Here the geometry is reduced to that of
the skeletal graph---the deformation retract of a channel component of the complement to the triply periodic surface. We also introduce other related geometries which we consider in parallel. These are the honeycomb lattice underlying graphene, and the P and D surfaces, which are the other triply periodic self--symmetric  surfaces. Chapter 3 describes the mathematical model we work with. This includes the Harper Hamiltonian and the relevant Hilbert space and $C^*$ algebra, the Bellissard--Harper algebra.  We discuss  the $C^*$ geometry in Chapter 4. This includes our analysis of the Berry connection, topological charges and stability
of the singular points as well as a slicing method to detect singular points or monopoles. Chapter 5 contains our results about degeneracies in the spectrum of the Harper Hamiltonian in the commutative case using singularity and representation theories. In Chapter 6 we summarize the results of our analysis for the cases mentioned above including the new results about
the topological charges. We finish this chapter and the paper with a conjecture about a commutative/non--commutative duality and remarks about approaching it.

\section{The Double Gyroid (DG) and Related Geometries and Material}
\subsection{The Geometry}
The gyroid is a triply periodic constant mean curvature surface that is embedded in  $\mathbb{R}^3$ \cite{Grosse}.
Figure \ref{fig1} shows a picture of the gyroid. It was discovered  in 1970 by Alan Schoen \cite{Schoen}. A single gyroid has symmetry group $I4_1{32}$  in Hermann-Maguin notation. Here the letter $I$ stands for bcc. The gyroid surface can be visualized by using the
level surface approximation \cite{lambert}
\begin{equation}
L_t:\sin x\cos y + \sin y \cos z+ \sin z \cos x=t
\label{level}
\end{equation}
In nature the single gyroid was observed as an interface for di--block co--polymers \cite {Hajduk}.
The {\bf double gyroid} consists of two mutually non--intersecting embedded gyroids. Its symmetry group is
$Ia\bar3d$ where the extra symmetry comes from interchanging
the two gyroids. It also has a level surface approximation which is given by the above expression (\ref{level}) with
$L_{w}$ and $L_{-w}$ for $0\leq w< \sqrt {2}$. The picture on the left hand side of Figure \ref{fig1} is actually a double gyroid or a ``thick"  surface.
\begin{figure}[t]
\includegraphics[height=2in]{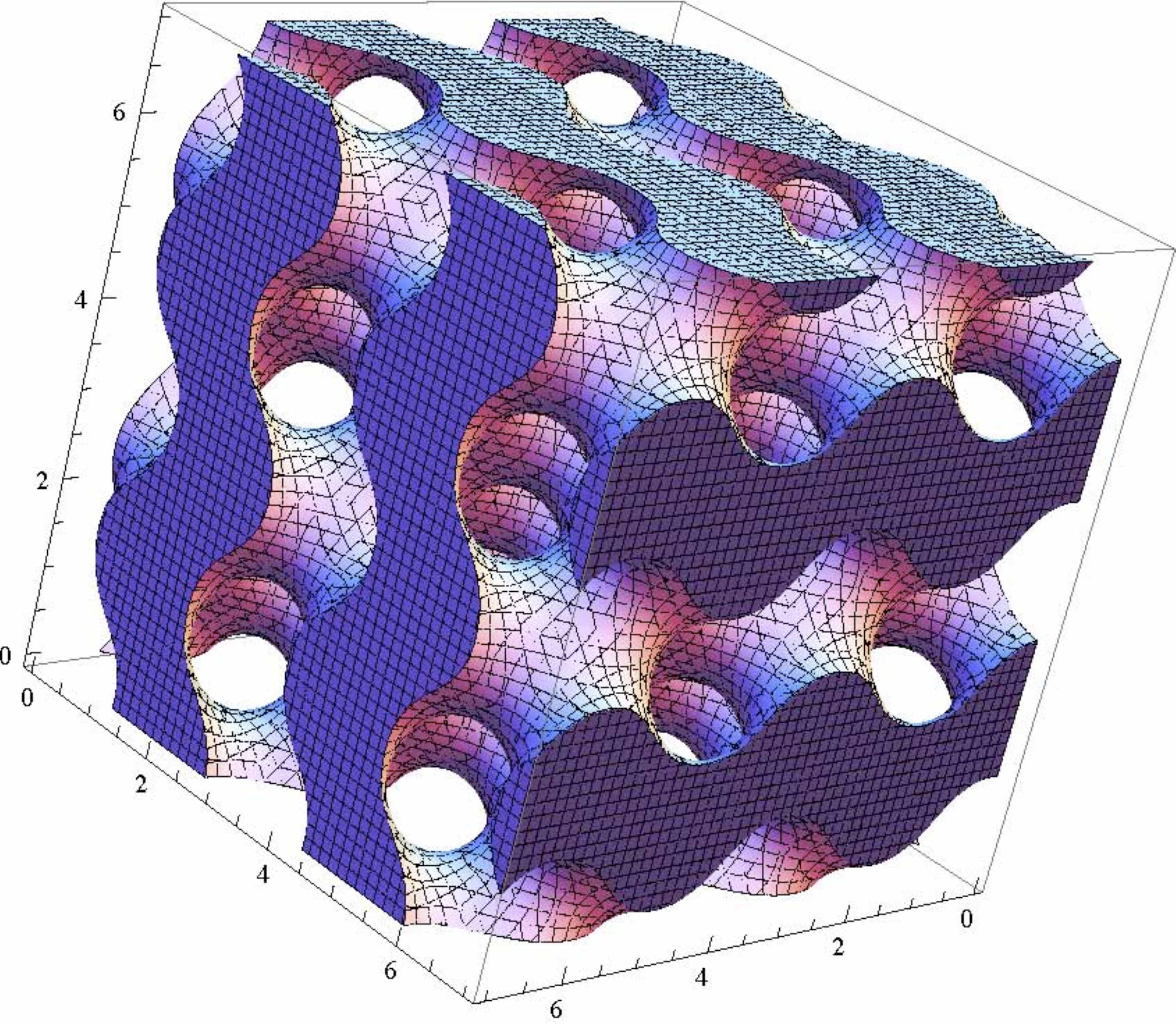}
\hspace{1cm}
\includegraphics[height=2in]{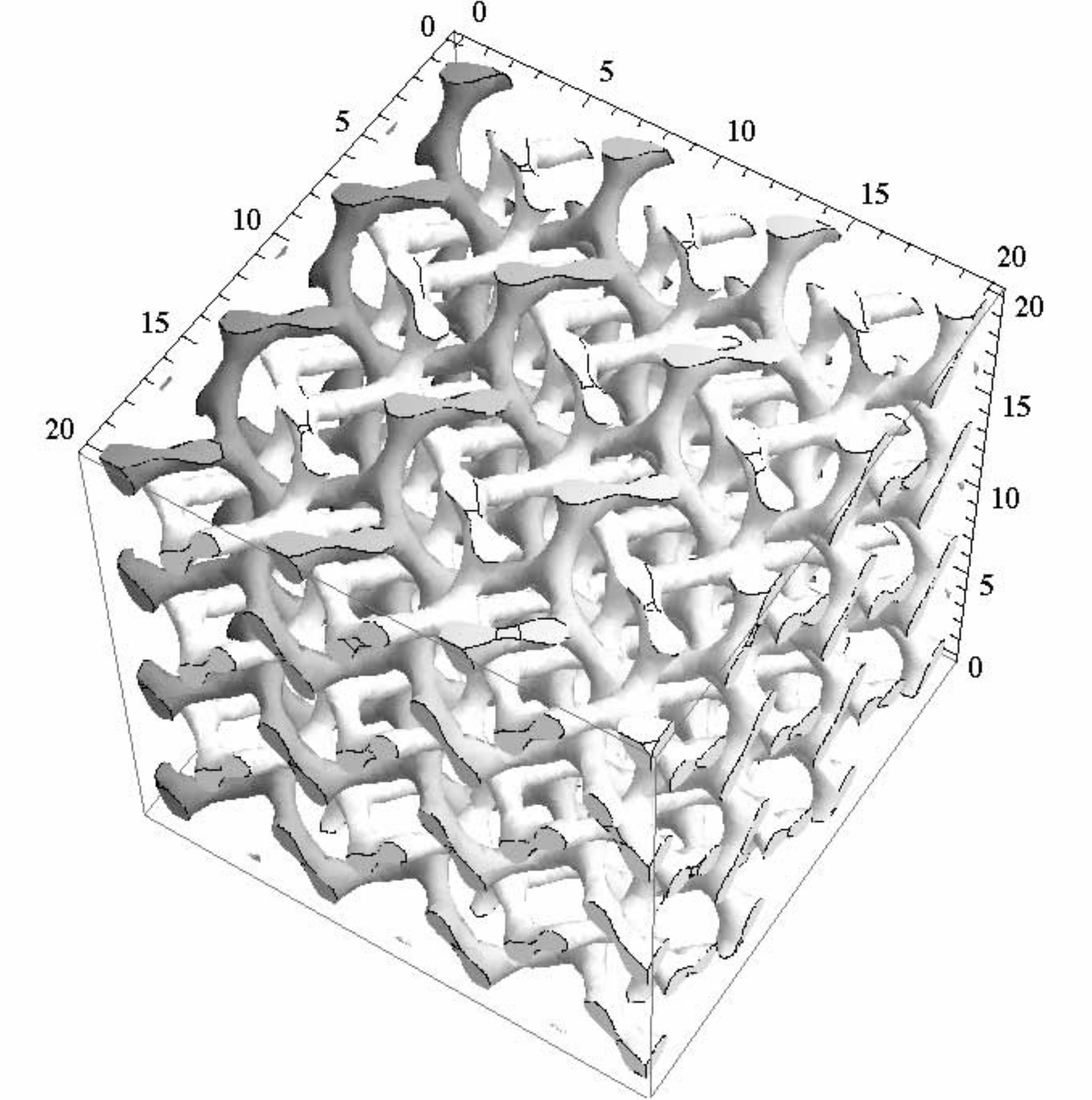}
\caption{The fat gyroid surface $W$ (left) and the two channel systems $C_+$ and $C_-$ (right)}
\label{fig1}
\end{figure}

Let us fix some notation.
We will denote by  $S=S_1 \amalg S_2$
 the double gyroid surface. Its complement $C=\mathbb{R}^3\setminus S$ has three connected components, which we will call $C_+,C_- $ and $W$. $W$
 can be thought of as a ``thickened'' (fat) surface which we will refer to as DG wall. There is a deformation retract of $W$ onto a single gyroid.

There are also two channel systems $C_+$ and
$C_-$, shown in Figure \ref{fig1}. These channels form Y-junctions where three channels meet under a $120$ degree angle.  Each of these channel systems
can be deformation retracted to a  skeletal graph $\Gamma_{\pm}$. We will concentrate on one of these channels and its skeletal graph $\Gamma_+$, shown in Figure \ref{fig2}.

 \begin{figure}[t]
 \includegraphics[height=2in]{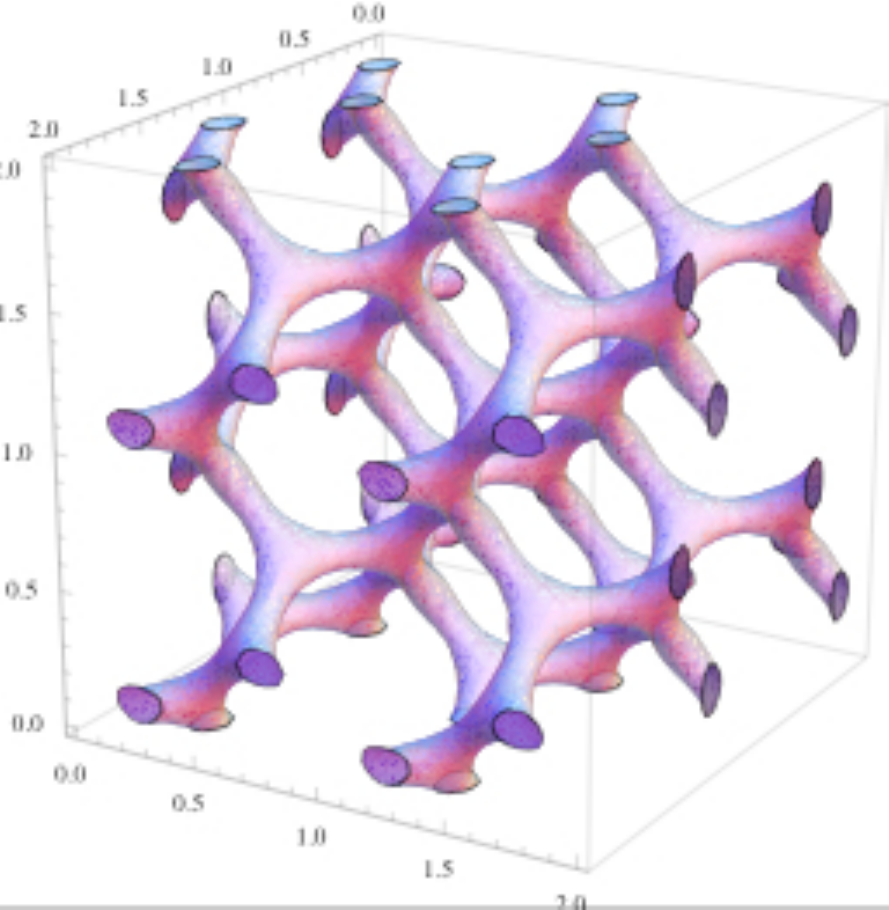}
 \hspace{2cm}
\includegraphics[width=2in]{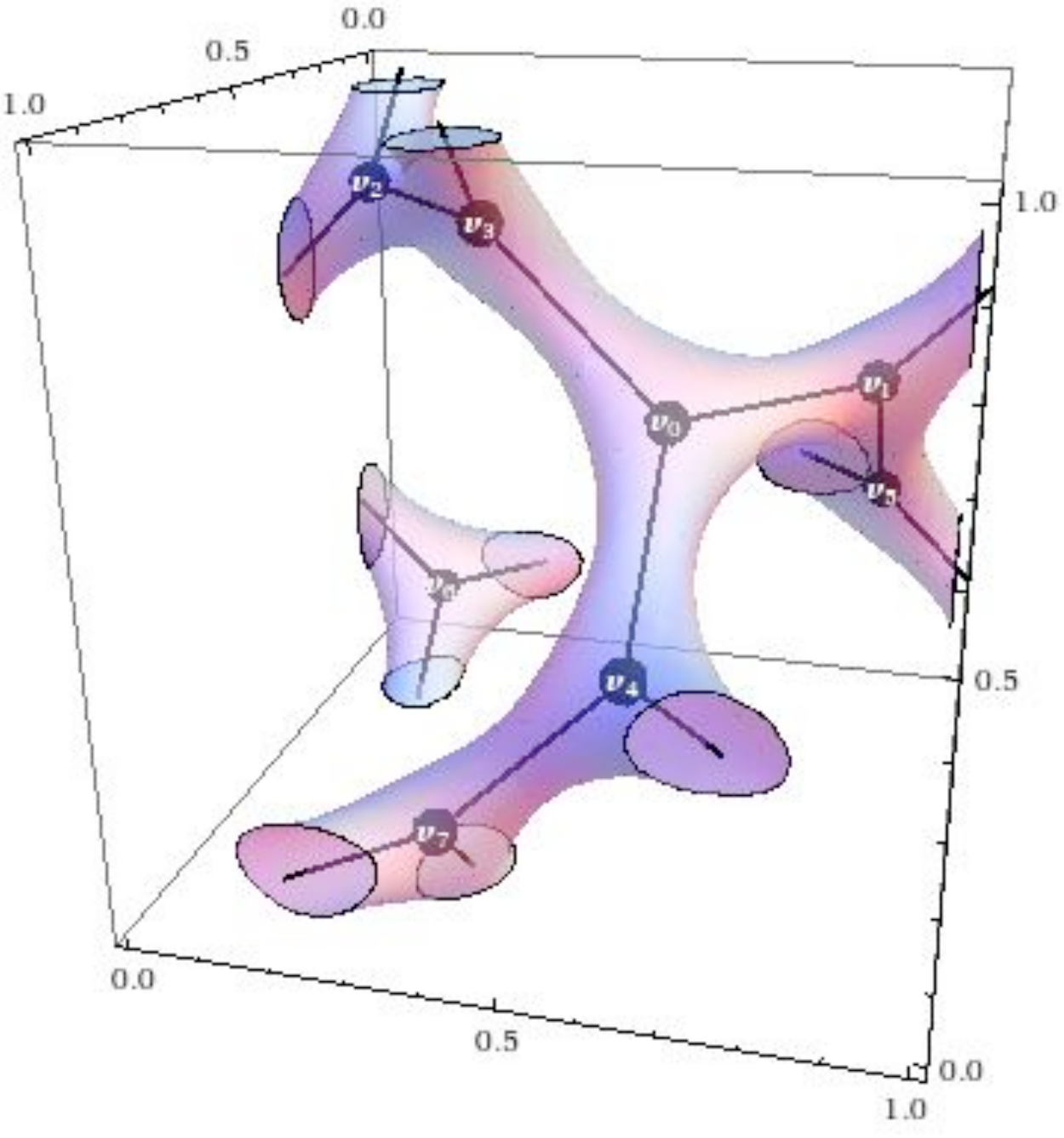}
\caption{One of the two channels (left) and its skeletal graph in the unit cell (right) }
\label{fig2}
\end{figure}

\subsection{The Material and Production}

A solid-state double gyroid can be synthesized by self-assembly at the nanoscale,
as demonstrated by Urade et al. \cite{Hillhouse}.
The first step is production of a nanoporous silica film with the structure of
unidirectionally cotracted double gyroid (DG) with lattice constant of about 18 nm.
The pores in the structure can then be filled with other materials to form nanowires.
Fabrication of platinum DG nanowires by electrodeposition
has been demonstrated in \cite{Hillhouse},
where it has also been mentioned that the process can be used for other metals
or semiconductors.

\subsection{Related Geometries: the P and D surfaces}

There are two other triply periodic self--dual and symmetric CMC surfaces- the cubic (P) and the diamond (D) network. They are shown in Figure \ref{fig3} together with their wire networks obtained in the same way as for the gyroid. Here we summarize the results from \cite{kkwk2}.

The P surface has a complement which has two connected components each of which can be retracted to the simple cubical graph whose vertices are the
integer lattice $\Z^3\subset \R^3$. The translational group is  $\Z^3$ in this embedding, so it reduces to the case of a Bravais lattice.

The D surface has a complement consisting of two channels each of which can be retracted to the diamond lattice $\Gamma_{\diamond}$.  The diamond lattice
is given by two copies of the fcc lattice,
where the second fcc is  the shift by $\frac{1}{4}(1,1,1)$ of the standard fcc lattice, see Figure \ref{fig3}. The edges
are nearest neighbor edges.
 The symmetry group is $Fd\bar3m$.

\begin{figure}[h]
\hspace{1cm}
\includegraphics[height=2in]{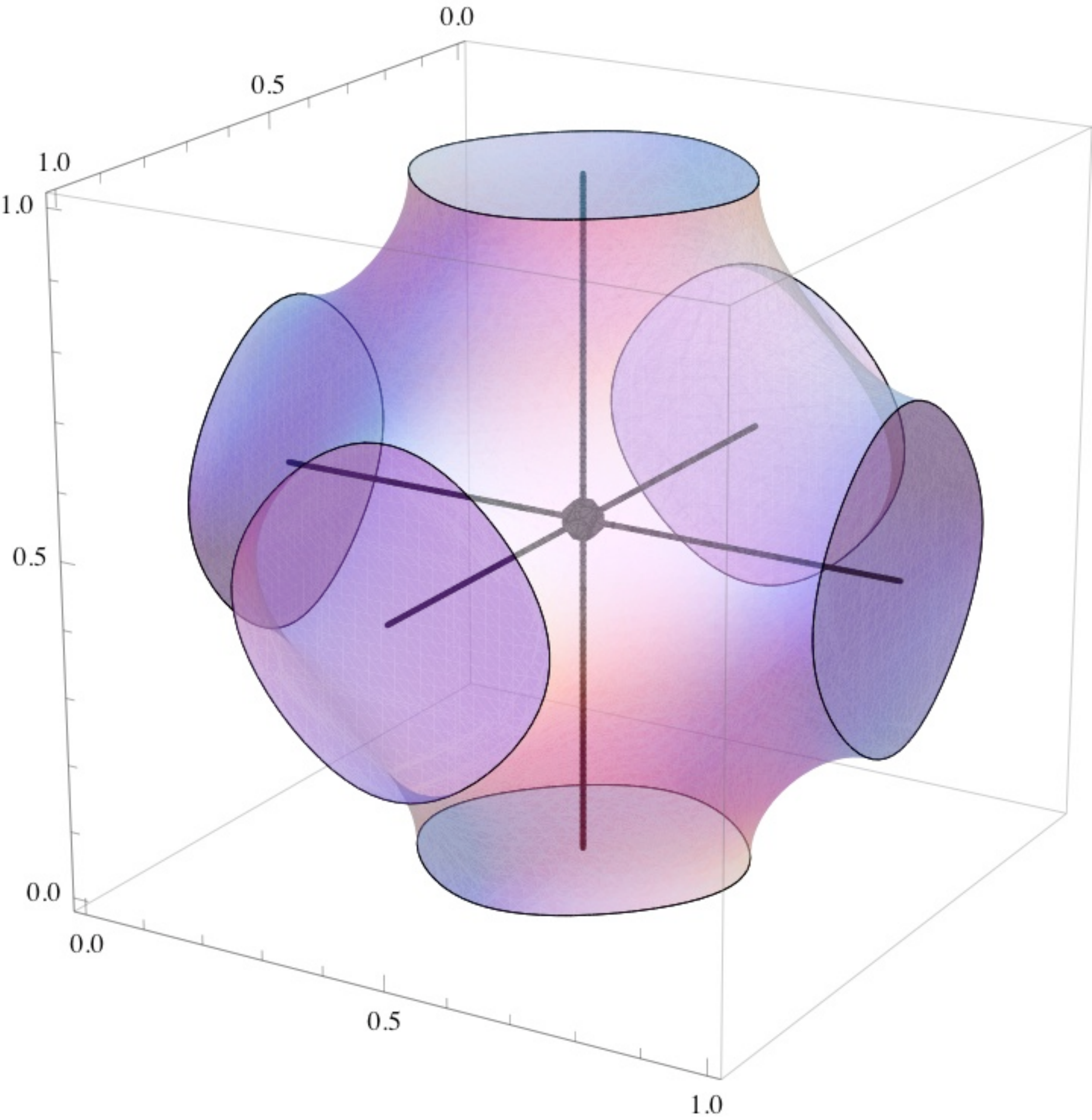}
\hspace{2cm}
\includegraphics[height=2in]{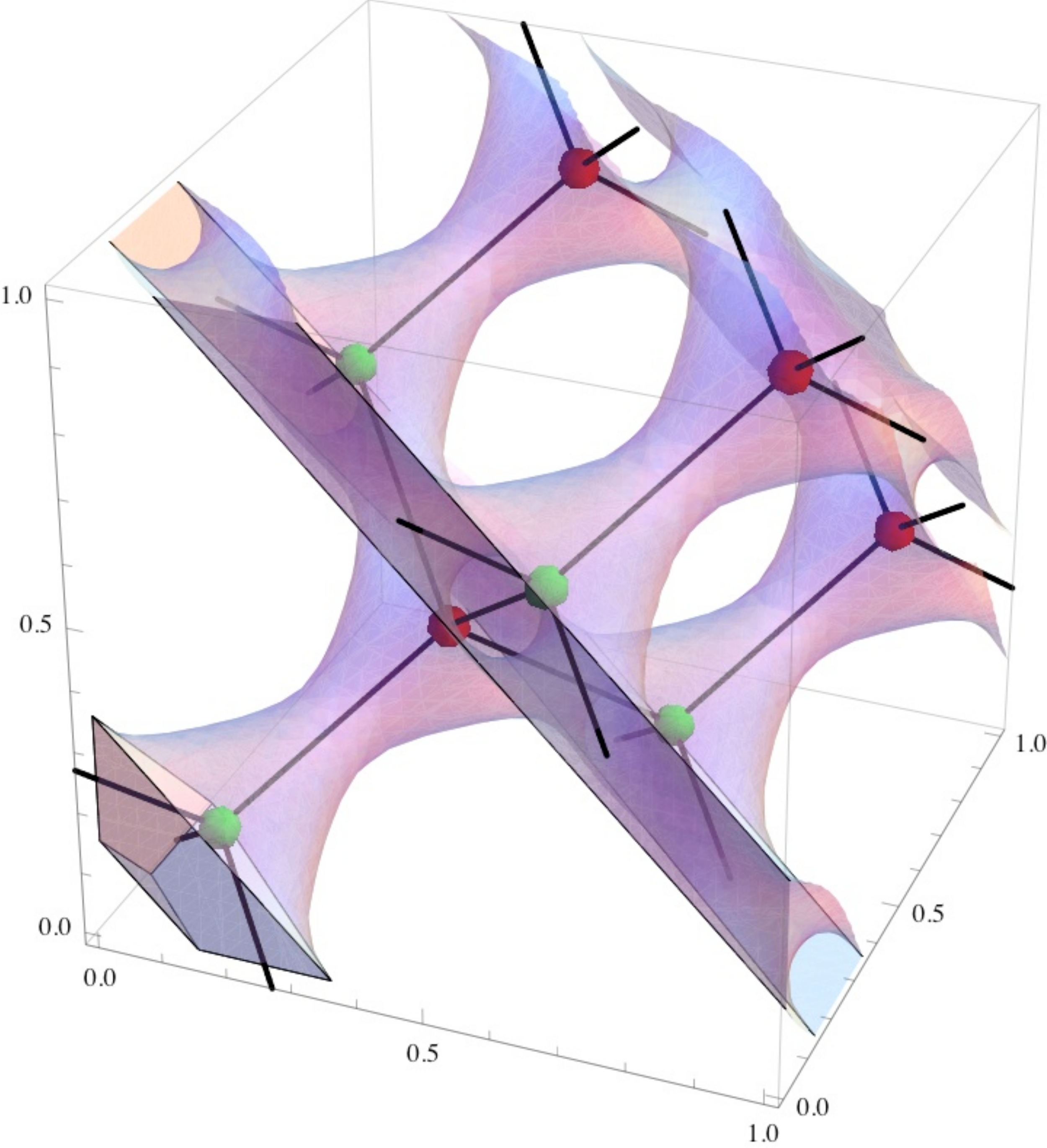}
\caption{The cubic (P)   (left) and the diamond (D) wire network (right)}
\label{fig3}
\end{figure}
\subsection{Graphene}
Graphene consists of one-atom thick planar sheets of carbon atoms that are densely packed in a honeycomb
crystal lattice. This two--dimensional material has attracted much interest recently, partially because of the existence of Dirac points where excitations show a linear dispersion relation.
Its electronic properties are described by a Harper Hamiltonian: see the review
\cite{CastroNeto} and references therein.
Here we will reproduce some of the known facts, such as the Dirac points using our
non--commutative geometry machine.

\section{Mathematical Model and Generalization: Graphs and Groupoid Representation}

\subsection{Discrete model  and Harper Hamiltonian}
We will now describe how to obtain the Harper Hamiltonian for any given graph $\Gamma\in \R^n$ with a given maximal translation group $L\simeq \Z^n$ \cite{Harper}. We will start with the commutative case without an external field, and then progress to  non--commutative case where the graph is placed in a constant external magnetic field. The mathematical set--up we will describe below can be understood in terms of Weyl quantization and Peierls substitution in physics \cite{PST}. Without the magnetic field the Harper Hamiltonian is given by translations, but in the presence of a magnetic field all translations turn into magnetic translations or Wannier operators, which cease to commute with each other.

Mathematically the discretization by the above process yields
 the Hilbert space $\H=\ell^2(V(\Gamma))$, where $V(\Gamma)$ are the vertices of $\Gamma$,
 and a projective representation of the translation group $L$ as well as an operator $H$, the Harper Hamiltonian. Concretely, the elements $l$ of $L$ act
on the functions $\Psi$ via the usual translations $T_l: T_l\psi(l')=\Psi(l-l')$.

\subsection{Quotient Graph and Harper Hamiltonian}
In general, given a embedded graph $\Gamma\in \R^n$, with a given maximal translation group $L\simeq \Z^n$, we consider the quotient graph $\bar\Gamma:=\Gamma/L$ and the projection $\pi:\G \to  \bG$.
 The quotient graphs for our four main examples are given in Figure \ref{quotientgraphs}.
 \begin{figure}[h]
\includegraphics[width=\textwidth]{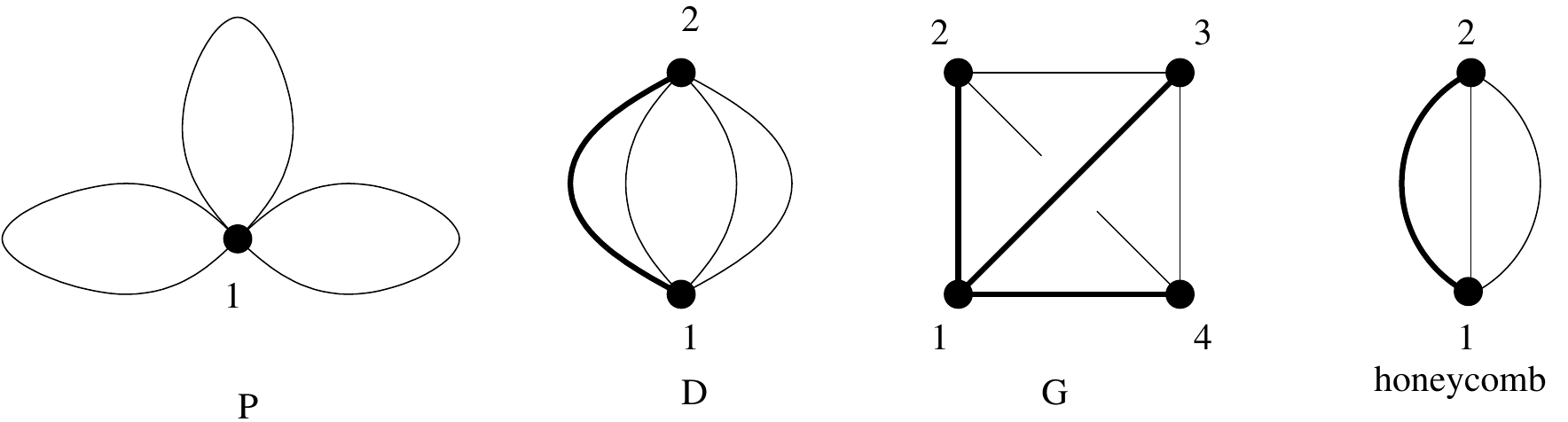}
 \caption{The quotient graphs of the  P,D,G surfaces and the honeycomb lattice, together with a spanning tree and an order of the vertices.} \label{quotientgraphs}
 \end{figure}

 The vertices of this graph
are in 1--1 correspondence with vertices or sites of $\Gamma$ in a fundamental cell.
We can think of the graph $\bar\Gamma$ as embedded into $T^n=\R^n/\Z^n$. Each edge $e$ of $\bar \Gamma$
lifts to a pair of edge vectors $\vec{e},\cev{e}=-\vec{e}$ where the underlying line segment is any lift
of $e$ to $\Gamma$. This is well defined since any two lifts differ by a translation.

To each vertex $v\in \bG$ we can associate the Hilbert space $\H_v:=\ell^2(\pi^{-1}(v))$.
Then the whole Hilbert space $\H$ decomposes as
\begin{equation}
\H=\bigoplus_{v\text{ vertex of } \bG} \H_v
\end{equation}
Since all the $\H_v$ are separable Hilbert spaces, they are all isomorphic.

The Harper Hamiltonian is then given as follows.
For each edge $e$ between two vertices $v$ and $w$
of $\bG$ let $T_{\vec{e}}$ be the translation operator from $\H_w \to \H_v$.
This extends to an operator $\hat T_{\vec{e}}$ on $\H$ via $\hat T_{\vec{e}}=i_{\bar v} T_{\vec{e}} P_{\bar w}$
where $i_{\bar v}:\H_{\bar v}\to \H$ is the inclusion and $P_{\bar w}:\H\to \H_{\bar w}$ is the projection.

The Harper Hamiltonian is
\begin{equation}
H=\sum_{e\in E} \hat T_{\vec {e}}+\hat T_{-\vec{e}}
\end{equation}

\subsection{Harper Hamiltonian in the presence of a magnetic field}
Adding a constant magnetic field requires a slightly different definition of the Harper Hamiltonian.  We will use projective translation operators whose commutators include the fluxes of the magnetic field as follows: We define  a 2--cocycle $\alpha_B\in Z^2(T,U(1))$ by a two--form $\Theta$. Such a two--form is given by a skew symmetric matrix $\hat\Theta$
with $\Theta=\hat \Theta_{ij} dx_i\wedge dx_j$. We let $B=2\pi \Theta$ where $B$ is the norm of the magnetic field. In this way we obtain a two--cocycle $\alpha_B\in Z^2(\R^n,U(1))$: $\alpha_B(u,v)=\exp(\frac{i}{2} B(u,v))$.

We define magnetic translations by starting from $A$, which is  a potential for
$B$ (on $\R^n$). The magnetic translation partial isometry is now acting on a wave function as
$$
U_{l'}\psi(l)=e^{-i \int_l^{(l-l')}A}\, \psi(l-l')
$$

The magnetic Harper operator is defined as
\begin{equation}
H=\sum_{e \text { edges of }\bG} U_{\vec{e}}+U_{\cev{e}}
\end{equation}

\subsection{Generalization: Groupoid and quiver representations}
In the setting above, which we call the geometric examples, we have  distilled the following data: a finite graph $\bG$, the translational groups $L$ and a projective representation of it on $\H=\bigoplus \H_v$ and finally the Hamiltonian $H$.

We will now explore the possibility of obtaining such data from a more general setup. There are two
ways to do this: in terms of groupoids or in terms of quivers.

\subsubsection{Groupoid representation} Recall that a groupoid is a category whose morphisms are
all invertible. A representation of a groupoid is a functor from this category into a linear category.
In our case this will be the category of separable Hilbert spaces which is the full subcategory of the category of vector spaces whose objects are separable Hilbert spaces.

A graph  $\bG$ (here $\bG$ need not be finite) determines a groupoid $\Gpd$ as follows. The objects are the vertices of $\G$.
The morphisms are {\em generated} by the edges. That is for each oriented edge between $v$ and $w$ there is one generator $\phi_{\vec{e}}$
 in $Hom(v,w)$. The morphisms in this category are then the composable words in the $\phi_{\vec{e}}$ where composable means that the source of a letter is the target of the predecessor, with the
 relations that
 \begin{equation}
 \label{inveq}
 \phi_{\vec{e}}\phi_{\cev{e}}=id_{v}\in Hom(v,v),  \text{ the identity element}
\end{equation}
What this means is that the morphisms are the paths on $\G$ up to homotopy, with the constant path yielding the identity.

A  groupoid representation of $\Gpd$
it in separable Hilbert spaces then assigns to each vertex $v$ of $\bG$ a separable Hilbert space $\H_v$ and to each oriented edge $\vec{e}$ from $v$ to $w$
a morphism $\Phi_{\vec{e}}\in Hom(\H_v,\H_w)$ with the relation that $\Phi_{\vec{e}}\Phi_{\cev{e}}=id_{\H_v}$.

The groupoid representation is unitary if all the $\Phi_{\vec{e}}$ are.

\begin{rmk}
Notice that there is  an involution $\ast$
on the morphisms, by transposing the word and reversing the orientation of each letter.
So we can only look at involutive functors, that is functors which send $\ast$ to $\dagger$,
which guarantees that the representation is unitary.
\end{rmk}

\subsubsection{Quiver representation} There is a way to formulate this in quiver language.
Given a graph $\bG$ and an arbitrary choice of directions for the edges determines a quiver.
Now one can construct the double of the quiver, where each oriented edge is doubled with reverse
orientation. If we started from a graph, this means that each unoriented edge $e$ is replaced by the two oriented edges $\vec{e}$ and $\cev{e}$. Now the double of the quiver is independent of the original choice of orientation. It also has an involution on $\ast$ the set of its edges which is given by reversing orientation. The quiver representations we are looking at are those where $\ast$ goes to
$\dagger$.

\subsubsection{Hamiltonian of the representation}
Just as above we define

$$
H:=\sum_{e\in E(\bG)}\rho(\vec{e})+\rho(\cev{e}):\H \to \H
$$

\subsubsection{Representation of $\pi_1(\bar \Gamma)$}
If we fix a vertex $v_0$ of $\Gamma$ the groupoid representation naturally gives a representation
of $\pi_1(\bar \Gamma)$ as follows. Fix a set of generators of $\pi_1(\bar\Gamma)=\FF_{1-\chi(\bar \Gamma)}$  is the free group in $1-\chi$ generators. Each such generator $g_i$ is a directed simple
loop on the graph which is given by a sequence of directed edges $\vec{e}_{1i},\dots, \vec{e}_{n_ii}$.
Then $\rho(g_i)=\rho(\vec{e}_{1i})\circ \dots\circ\rho(\vec{e}_{n_ii})$ gives a representation of $\pi_1(\bG,v_0)$ on $\H_{v_0}$.

\begin{df}
We will denote the algebra generated by $\rho(\pi_1)$ by $\mathscr T$.
  We say the $\rho$ is maximal if the generators of $\pi_1$ map to linearly independent operators
  and that $\rho$ is of torus type if $\mathscr T=\TTheta$.
  \end{df}

If $\rho$ is of torus type then $\rho$ is a projective representation of $H_1(\bG)$,
the Abelianization of $\pi_1$. These are of a special type, namely those whose co--cycle
is given by a constant $B$ field as discussed in \cite{kkwk}.

In the geometric situation of  Chapters 3.1--3.3,  maximality is equivalent to the fact that the translational symmetry group is maximal.

\subsubsection{Spanning trees}
\label{sptreesec}
If we pick a rooted spanning tree of $\bG$ then we get isomorphisms
$\phi_{0v}:\H_{v_0}\simeq \H_v$ by using $\rho$ and concatenation along the unique shortest path of oriented edges from $v_0$ to $v$ in the spanning tree. Let $\Phi=\bigoplus_v \phi_{v0}:\H_{v_0}^{|V|}\to \H$ then
this isomorphisms yields a representation $\tilde \rho$ on $\H_{v_0}^{|V|}$ via pullback.

Likewise $\phi_{v0}$ induces an isomorphism of $\pi_1(\bG,v)$ and $\pi_1(\bG,v_0)$.
Using this identification, we get an representation $\hat \rho$ of $\sT$ on $\H$
and via pull-back with $\Phi$ on $\H_{v_0}^{|V|}$.

A  rooted spanning tree $(\t,v_0)$ also gives rise to one more bijection. This is between a set of (symmetric) generators of $\pi_1$ and the edges {\em not} in the spanning tree. The bijection is as follows.

If $\vec{e}$ is a directed edge from $v$ to $w$ then there is a generator $g_{\vec{e}}$ which is
given by the following path of ordered edges: (1) the unique shortest path in $\tau$  from $v_0$ to $v$ (2) $\vec{e}$
and (3) the  unique shortest path in $\tau$  from $w$ to $v_0$. It is clear that $g_{\vec{e}}=g^{-1}_{\cev{e}}$. By contracting the spanning tree, we see that this is indeed a set of
symmetric but otherwise independent generators.

For convenience, we set $g_{\vec{e}}=1$ if $e\in \tau$.

\subsubsection{Generalized Bellissard-Harper algebra}
Given a groupoid representation in separable Hilbert spaces of a finite graph $\bG$ we
call the $C^*$ algebra generated by  the operators $H$ and $\sT$ via $\hat \rho$ on $\H$ the Bellissard--Harper algebra of the pair $(\bG,\rho)$ and denote it by $\B$.

This general set gives the generalization of one of the results of \cite{kkwk}.

\begin{thm}
Any choice of spanning tree together with an order on the vertices gives rise to a faithful matrix
representation of $\B$ in $M_{|V|}(\sT)$.
\end{thm}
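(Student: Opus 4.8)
The plan is to use the rooted spanning tree to trivialize the bundle of Hilbert spaces $\{\H_v\}_{v\in\bG}$ and then simply read off the matrix entries of each generator of $\B$, showing that they lie in $\sT$.

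First I would invoke the isomorphism $\Phi=\bigoplus_v\phi_{v0}\colon\H_{v_0}^{|V|}\to\H$ from Section~\ref{sptreesec}, where $\phi_{v0}\colon\H_{v_0}\to\H_v$ is $\rho$ applied along the unique tree path from $v_0$ to $v$. Fixing the chosen order of the vertices, conjugation by $\Phi$ identifies operators on $\H$ with $|V|\times|V|$ matrices whose entries are operators on $\H_{v_0}$; since the representation is unitary ($\ast$ going to $\dagger$), $\Phi$ is unitary and this identification is a $\ast$-isomorphism. It is injective on all bounded operators, hence on $\B$, so the only real content of the theorem is that the matrix entries of the generators $H$ and $\hat\rho(\sT)$ actually lie in the subalgebra $\sT$.

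Next I would compute the two relevant matrices. A single edge operator $\rho(\vec e)$ with $\vec e$ from $v$ to $w$ has its one nonzero block in position $(w,v)$, equal to $\rho(\vec e)\colon\H_v\to\H_w$; conjugating, the $(w,v)$ entry becomes $\phi_{w0}^{-1}\,\rho(\vec e)\,\phi_{v0}\colon\H_{v_0}\to\H_{v_0}$. Because $\phi_{v0}$ is $\rho$ along the tree path $v_0\to v$ and $\phi_{w0}^{-1}$ is $\rho$ along the reversed tree path $w\to v_0$ (inverting via $\Phi_{\vec e}\Phi_{\cev e}=\mathrm{id}$), this composite is $\rho$ of the loop that runs along the tree from $v_0$ to $v$, crosses $\vec e$, and returns along the tree from $w$ to $v_0$---exactly the generator $g_{\vec e}$ of $\pi_1(\bG,v_0)$ built in Section~\ref{sptreesec}. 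Hence every entry of $\Phi^{-1}H\Phi$ is a sum of terms $\rho(g_{\vec e})\in\sT$ (and equals $\rho(1)=\mathrm{id}$ precisely when $e\in\t$). A parallel and easier computation shows that for $x\in\sT$ the operator $\hat\rho(x)$ becomes the diagonal scalar matrix $x\cdot\mathrm{Id}_{|V|}$: on $\H_v$ it acts through the identification $\pi_1(\bG,v)\cong\pi_1(\bG,v_0)$, which is conjugation by the tree path, and conjugating back by $\phi_{v0}$, which is $\rho$ along the tree path, cancels this and returns $x$.

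Finally I would assemble these: both families of generators land in the $\ast$-subalgebra $M_{|V|}(\sT)$, which is itself a $C^*$-algebra, so the $C^*$-algebra $\B$ that they generate maps entirely into $M_{|V|}(\sT)$, and faithfulness is inherited from $\Phi$ being an isomorphism. I expect the main obstacle to be the orientation-and-inverse bookkeeping in the edge computation---checking that the conjugated edge operator is exactly the holonomy $\rho(g_{\vec e})$ around the tree-completed loop, rather than its inverse or a conjugate. Once the trivialization $\Phi$ is in place, the diagonal form of $\hat\rho(\sT)$ and the faithfulness are routine.
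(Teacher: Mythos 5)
Your proposal is correct and follows essentially the same route as the paper's (much terser) proof: conjugating by the spanning-tree trivialization $\Phi$ sends each edge operator $\rho(\vec e)$ to the matrix whose single nonzero entry is $\rho(g_{\vec e})\in\sT$ in the $(w,v)$ slot, and the order on the vertices fixes the enumeration. Your added details --- the diagonal form $x\cdot\mathrm{Id}_{|V|}$ of $\hat\rho(\sT)$, the orientation bookkeeping identifying the conjugated edge operator with the holonomy around the tree-completed loop $g_{\vec e}$, and faithfulness via $\Phi$ being a ($\ast$-preserving) isomorphism --- are exactly what the paper leaves implicit.
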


\begin{proof}
This follows from the fact that under $\Phi$, $\rho(\vec{e})$ gets transformed to
the matrix entry $\rho(g_{\vec{e}})$ between the copies of $\H_{v_0}$ corresponding to $\H_{v}$ and
$\H_{w}$ under $\Phi$. Enumerating these vertices yields a matrix.
\end{proof}

In the following given a rooted spanning tree $\tau$ we will only choose orders $<$ such that the root
is the first element. The resulting matrix Hamiltonian will be denoted by $H_{\tau,<}$.

\section{$C^*$--geometry}

\subsection{Non--commutative case}
The non--commutative geometry of such a quiver representation in general and the one stemming from
the geometric situation in particular is that of $\B$.

Just like in \cite{kkwk,kkwk2} one can now ask the question whether or not $\B$ is isomorphic
to the full matrix algebra and hence Morita equivalent to $\sT$ itself. In the geometric case $\sT=\TTheta$ is generically simple and led to the expectation ---which we proved in \cite{kkwk}---
that generically $\B=M_{|V|}(\TTheta)$. This of course need not be the case in general.

It stands to reason that other more complicated physical phenomena could be described by
such algebras.

It actually turns out that in  the geometric examples not only is the algebra indeed the full matrix
algebra at  irrational parameter values, but that there are even only finitely many or a dimension--$1$ subset of rational matrix parameters $\Theta$, where $\B\subsetneq M_k(\TTheta)$.

\begin{thm} \cite{kkwk,kkwk2}
For the geometric cases of the G surface and the honeycomb lattice
the Bellissard--Harper algebra is the full matrix algebra except at finitely many values of $\Theta$
given in Chapter 6. For the P surface and all Bravais lattices $\B=\sT=M_1(\sT)$.
For the $D$ surface, the set of values of $\Theta$ for which    $\B\subsetneq M_2(\TTheta)$
is given by 6 one dimensional families and finitely many special points (also listed in Chapter 6). If $\B_{\Theta}$ is the full matrix algebra then it is Morita equivalent to $\TTheta$.
\end{thm}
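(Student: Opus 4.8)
The plan is to prove each of the four geometric cases separately, since the statement is really a compilation of case-by-case computations using the faithful matrix representation $H_{\tau,<}\in M_{|V|}(\sT)$ from the preceding theorem. In every case, by that theorem, the algebra $\B$ is the $C^*$-subalgebra of $M_{|V|}(\TTheta)$ generated by $H_{\tau,<}$ together with the image operators $\hat\rho(\sT)$, and we ask for which $\Theta$ this subalgebra is proper. The guiding principle is that when $\TTheta$ is simple (which occurs exactly when $\Theta$ is irrational, i.e.\ the entries of $\hat\Theta$ are suitably irrational), the subalgebra $\B$ is automatically forced to be the full matrix algebra unless there is an obstruction visible already at the level of the center or of a finite-dimensional quotient. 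So the strategy is: (i) fix a convenient spanning tree $\tau$ and vertex order for each quotient graph $\bG$ from Figure~\ref{quotientgraphs}; (ii) write out the explicit matrix $H_{\tau,<}$ whose entries are the generators $\rho(g_{\vec e})$ of $\TTheta$; and (iii) determine the exceptional locus of $\Theta$.

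For the trivial cases I would dispatch them first. For all Bravais lattices and the P surface, the quotient graph has a single vertex ($|V|=1$), so $M_{|V|}(\sT)=M_1(\sT)=\sT$ and $\B=\sT$ by inspection; there is nothing to compute beyond observing $|V(\bG)|=1$. The honeycomb (graphene) and single-gyroid cases have $|V|=2$ and $|V|=4$ respectively, and the D surface has $|V|=2$. For these I would reduce to the heart of the argument: one must show that the generated algebra exhausts $M_{|V|}(\TTheta)$. The standard technique is to exhibit enough products and commutators of $H$ with the central-after-specialization elements $\hat\rho(\sT)$ to produce all matrix units $e_{ij}$ and a generating set of $\TTheta$ sitting diagonally. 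Concretely, when $\TTheta$ is simple one uses that the off-diagonal entries of $H$ are invertible elements of $\TTheta$ (they are the unitaries $\rho(g_{\vec e})$), so suitable words in $H$ and the diagonal torus elements let one isolate individual matrix units; once all $e_{ij}$ are available and one copy of $\TTheta$ sits in a diagonal corner, the whole of $M_{|V|}(\TTheta)$ is generated.

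The genuinely delicate part, and what I expect to be the main obstacle, is the determination of the exceptional set of $\Theta$ where $\B\subsetneq M_{|V|}(\TTheta)$ at \emph{rational} or partially-rational parameters. When $\Theta$ is rational, $\TTheta$ is no longer simple---it is Morita equivalent to $C(\mathbb T^n)$ with a nontrivial center---and the image of $H$ can fail to generate the full matrix algebra because it may be conjugate into a proper block-decomposable or commutative subalgebra. Identifying these degenerate $\Theta$ amounts to computing when the matrix $H_{\tau,<}$, evaluated in a finite-dimensional irreducible representation of the rational $\TTheta$, degenerates so that its generated algebra drops rank. For the gyroid and honeycomb this happens only at finitely many $\Theta$ (to be tabulated in Chapter~6), while for the D surface the extra structure of the two-vertex quotient produces six one-parameter families plus finitely many points; pinning down exactly these families is the computational core, requiring a careful analysis of the defining relations of $\TTheta$ against the specific edge-loop words $g_{\vec e}$. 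Finally, the Morita equivalence claim is immediate and requires no separate work: whenever $\B=M_{|V|}(\TTheta)$ is the full matrix algebra over $\TTheta$, it is Morita equivalent to $\TTheta$ because $M_k(\mathcal A)$ is always Morita equivalent to $\mathcal A$ via the standard imprimitivity bimodule $\mathcal A^k$.
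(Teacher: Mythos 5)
The first thing to note is that the paper itself does not prove this theorem: it is imported verbatim from \cite{kkwk,kkwk2}, so your proposal has to be measured against the arguments of those papers. Your scaffolding does match their setup: one fixes a rooted spanning tree and order to get the faithful representation $H_{\tau,<}\in M_{|V|}(\sT)$; the Bravais/P case is indeed immediate from $|V(\bG)|=1$; and the Morita statement is, as you say, just the standard equivalence of $M_k(\mathcal{A})$ with $\mathcal{A}$. But your two central mechanisms are wrong, and they are wrong in a way that blocks exactly the part you identified as the core. First, your guiding principle---that properness of $\B$ can only occur where $\TTheta$ fails to be simple, i.e.\ at rational parameter values---is contradicted by the very statement you are proving: the D-surface exceptional locus contains six \emph{one-parameter families} (e.g.\ $\bar q_1=q_2=q_3=\bar\chi_2^4$ with $\chi_1^2=1$), along which the deformation parameters sweep through irrational values. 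Properness there is caused by extra relations among the edge unitaries that permit an invariant decomposition, not by a nontrivial center or finite-dimensional quotient. Moreover, simplicity of $\sT$ never ``automatically'' forces a subalgebra of $M_k(\sT)$ containing the diagonal copy of $\sT$ to be everything---the diagonal subalgebra itself is a counterexample---so the generic-fullness step needs an actual argument, not an appeal to simplicity.

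Second, your device for manufacturing matrix units assumes the off-diagonal entries of $H_{\tau,<}$ are invertible unitaries $\rho(g_{\vec e})$. They are not: in the honeycomb quotient graph the two vertices are joined by \emph{three} edges and the relevant entry is a sum of the form $1+U+V$ (four summands for the D surface), which is a sum of unitaries and fails to be invertible precisely at interesting parameter values---in the commutative specialization its vanishing locus is where the Dirac points sit. The actual technique in \cite{kkwk,kkwk2} is different: conjugation by the torus unitaries rescales each monomial summand of an entry by a phase, and when enough of these phases are distinct (this is the content of conditions like $\Phi\neq 1$ or ``all $\phi_i$ different'' in the gyroid classification) a character-separation/commutator argument isolates the individual monomials, yielding the matrix units; the exceptional $\Theta$ are exactly the parameters where this phase separation degenerates. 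Without this idea you have no way to derive the finite lists for G and the honeycomb, nor the six families for D, so the computational core of the theorem remains unproved in your outline.
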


\begin{rmk}
Note that except for the P and general Bravais case, these families above give examples of continuos
variations of algebras whose $K$--theory does not vary continuously.
\end{rmk}

\subsubsection{K--theory labeling}
One application of the non--commutative approach is gap labeling by $K$--theory. If
the Hamiltonian $H$ has spectrum bounded from below, then each gap in the spectrum
gives rise to a projector $P_{<E}$ onto the Eigenspaces with Eigenvalues less than any fixed value $E$ in the gap, see e.g. \cite{B}. The gap labeling then associates the $K$--theory class of $P_{<E}$
to the gap.

By the above result, via the inclusion $\B\hookrightarrow M_k(\sT)$ the projector $P_{<E}$
also gives rise to a $K$--theory class in $K(M_k(\sT))\simeq K(\sT)$. Using this embedding, one
can deduce analogues of the famous Hofstadter Butterfly.

\begin{thm}
If $(\bG,\rho)$ is toric non--degenerate, then the Hamiltonian $H$ as an operator on $\H$
has only finitely many gaps if the magnetic field is rational in the sense that the matrix $\Theta$
is rational.
\end{thm}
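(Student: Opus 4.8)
The plan is to reduce the claim to a finiteness property for the spectrum of a self--adjoint element of a matrix algebra over a \emph{rational} noncommutative torus, and then to exploit the fact that such a torus is a homogeneous (Type~I) $C^*$--algebra. First I would apply the matrix representation theorem above: after fixing a rooted spanning tree $\tau$ and a compatible order $<$ on the vertices, the faithful unital inclusion $\B\hookrightarrow M_{|V|}(\sT)$ carries $H$ to the self--adjoint matrix $H_{\tau,<}$. Since $(\bG,\rho)$ is toric non--degenerate, in particular $\sT=\TTheta$, so $H$ is realized as a self--adjoint element of $M_{|V|}(\TTheta)$. Because $\B$ and $M_{|V|}(\TTheta)$ are unital $C^*$--algebras sharing the same unit and $\B$ acts on $\H$ faithfully, spectral permanence gives that the spectrum of $H$ as an operator on $\H$ equals its spectrum in $\B$, which equals its spectrum in $M_{|V|}(\TTheta)$. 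It therefore suffices to bound the number of gaps of an arbitrary self--adjoint element of $M_{|V|}(\TTheta)$ when $\Theta$ is rational.

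The next step uses the structure of $\TTheta$ for rational $\Theta$. If every entry of $\hat\Theta$ is rational, a common denominator $q$ makes suitable powers of the generating unitaries central, so the center of $\TTheta$ is a commutative $C^*$--algebra isomorphic to $C(T^m)$, where $T^m$ is a torus. Consequently $\TTheta$ is homogeneous: all its irreducible representations have one and the same finite dimension $d$, and $\TTheta$ is isomorphic to the algebra of continuous sections of a locally trivial bundle of matrix algebras $M_d(\C)$ over $T^m$. Passing to matrices, $M_{|V|}(\TTheta)$ is the section algebra of a bundle with fiber $M_N(\C)$, $N=|V|\,d$, over the same compact base.

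With this description in hand, I would compute the spectrum fiberwise. For the section algebra of a (locally trivial) matrix bundle over a compact base, an element is invertible if and only if it is invertible in every fiber, so $\operatorname{spec}(H)=\bigcup_{x\in T^m}\operatorname{spec}(H(x))$, where $H(x)$ is the value of $H$ in the fiber over $x$, a self--adjoint matrix of size $N$. Ordering the eigenvalues as $\lambda_1(x)\le\cdots\le\lambda_N(x)$, each $\lambda_j$ is a continuous real function on the compact connected torus $T^m$ (continuity of ordered eigenvalues of a continuous family of Hermitian matrices, verified in local trivializations), so each image $\lambda_j(T^m)$ is a compact interval. Hence the spectrum is a union of at most $N$ compact intervals, and its complement inside its convex hull has at most $N-1$ bounded components; that is, $H$ has only finitely many gaps.

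The main obstacle is the second step: showing that a higher--dimensional rational noncommutative torus is homogeneous with center $C(T^m)$ and finite--dimensional irreducibles, i.e.\ that rationality forces $\TTheta$ to be Type~I rather than simple. This is precisely what breaks in the irrational case, where $\TTheta$ is simple with trivial center and the spectrum of $H$ may be a Cantor set with infinitely many gaps (the Hofstadter butterfly). The rational case must therefore be established from the explicit central unitaries and the representation theory of the cocycle--twisted group algebra of $\Z^n$, rather than from any soft functional--analytic argument; keeping careful track of $q$, $d$, and the base torus $T^m$ is the technical heart of the proof.
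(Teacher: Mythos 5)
Your proof is correct, but it takes a genuinely different route from the paper, which in fact states this theorem with no proof at all: the intended mechanism, signalled by the paragraph immediately preceding it, is $K$--theoretic gap labeling. In that argument each gap determines a spectral projector $P_{<E}$, hence via the embedding $\B\hookrightarrow M_{|V|}(\TTheta)$ a class in $K_0(\TTheta)$; for rational $\Theta$ the canonical trace takes values on projections in a discrete subgroup $\frac{1}{q}\Z\subset\R$, and since the trace is faithful the labels strictly increase from gap to gap while being confined to $[0,|V|]$, so only finitely many gaps can occur. Your route instead goes through the structure theory of rational noncommutative tori and fiberwise eigenvalue continuity, and it yields the sharper conclusion that the spectrum is a union of at most $N=|V|\,d$ closed intervals, hence at most $N-1$ gaps --- but it forgets the labels, which are exactly what the paper wants for its Hofstadter--butterfly analogues and which persist (without any finiteness) in the irrational case. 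One caution on the step you flagged: knowing the center is $C(T^m)$ does not by itself give homogeneity; the missing link is that the commutator bicharacter of $\Z^n$ descends nondegenerately to the finite quotient $\Z^n/\Lambda$ by its radical $\Lambda$ (a finite--index sublattice exactly when $\Theta$ is rational), so each fiber is a simple twisted group algebra, i.e.\ $M_d(\C)$ with $d^2=[\Z^n:\Lambda]$, and Fell's theorem then supplies local triviality of the bundle. Since this is standard (H{\o}egh-Krohn--Skjelbred in dimension $2$, the same cocycle argument in general), your proof is complete modulo citation; note also that the two approaches meet in the end, because over the connected base the normalized trace of a projection-valued section is $\mathrm{rank}/d$, which is precisely the discreteness of the trace range that drives the paper's $K$--theoretic argument.
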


\subsection{Commutative case}
\label{commutative}
If $\B$ is commutative, for instance if $\Theta=0$ in the geometric situation, then
by the Gel'fand--Naimark theorem, there is a compact\footnote{$\B$ is unital} Hausdorff space $X$,
such that $\B\simeq C^*(X)$. The points of $X$ can be thought of as characters, i.e.\ $C^*$--homomorphisms $\chi:\B\to \C$. More precisely these characters are in bijection with the maximal
ideals of $\B$ which are the points. If we wish to make this distinction, we write  $p_{\chi}$ for the point of $X$ corresponding to the character $\chi$ and vice--versa $\chi_p$ for the character corresponding to $p$.

Likewise there is a space $T$ which corresponds to the $C^*$--algebra $\sT$. In the geometric case
$T=T^n=\R^n/L$.

As usual the correspondence between the algebra of functions and the spaces is contravariant.
This means that the inclusion $\hat\rho:\sT\to \B$ gives rise to a morphism $X\to T$.
If $(\rho,\Gamma)$ is maximal, then $\sT\to \B$ is injective and hence $X\to T$ is surjective.

Furthermore let us consider the algebra $\sT^n$ given by the direct sum of $n$ copies of $\sT$.
The space corresponding to this algebra is simply $T\amalg \dots \amalg T$ $n$--times.

Since after choosing an order and a rooted spanning tree $\B\subset M_{|V|}(\sT)$,
we can lift any character $\chi$ of $\sT$ to a $C^*$-homomorphism:
$\hat \chi:M_{|V|}(\sT)\to M_{|V|}(\C)$ of $\B$ by applying $\chi$ to each entry.

\begin{df}
We call a point $\chi$ of $T$ degenerate if $\hat\chi(H)$ has less than  $|V|$ distinct Eigenvalues.
\end{df}

Repeating the proof of \cite{kkwk} we arrive at the following

\begin{thm}
If $(\rho,\bG)$ is maximal the map $\pi:X\to T$ is ramified over the degenerate points and
furthermore $X$ is the quotient of the trivial cover $T^n$ where the identifications are made in
the fibers over degenerate points and moreover these correspond to the degeneracies of $H$ over these points.

In other words $X$ can be thought of as the spectrum of the family of Hamiltonians $H(p)=
\chi_p(H)$
parameterized over $T$.
\end{thm}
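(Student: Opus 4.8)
The plan is to make the fibers of $\pi:X\to T$ completely explicit through the faithful matrix representation $\B\hookrightarrow M_{|V|}(\sT)$ of the first theorem, and then to read off both the ramification and the global quotient structure from the eigenvalues of the evaluated matrix Hamiltonian. First I would record the one structural input that makes everything work: after trivializing along the rooted spanning tree, the image $\hat\rho(f)$ of $f\in\sT$ is the scalar matrix $f\cdot I$ (the tree identifications conjugate the common $\sT$--action on each $\H_v$ back to $\H_{v_0}$). Consequently the lift $\hat\chi_p:M_{|V|}(\sT)\to M_{|V|}(\C)$ of a point $p\in T$ restricts on $\B$ to a $*$--representation $\sigma_p$ with $\sigma_p(\hat\rho f)=f(p)I$ and $\sigma_p(H)=H(p):=\hat\chi_p(H_{\t,<})$, a Hermitian matrix. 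Since $\B$ is commutative and generated over $\hat\rho(\sT)$ by $H$, the image $\sigma_p(\B)$ is the commutative algebra $\C[H(p)]$, and the goal becomes the identification $X\cong\{(p,\lambda):\lambda\in\mathrm{spec}\,H(p)\}$ with $\pi$ the projection to $p$.

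For the forward inclusion I would use Cayley--Hamilton over the commutative ring $\sT$. The characteristic polynomial $P_{\t,<}(z)=\det(zI-H_{\t,<})\in\sT[z]$ is monic of degree $|V|$, and $P_{\t,<}(H_{\t,<})=0$ in $M_{|V|}(\sT)$; rewriting the coefficients as $\hat\rho(c_k)=c_kI\in\B$ this becomes a polynomial identity $H^{|V|}-\sum_k\hat\rho(c_k)H^k=0$ inside $\B$. Applying any character $\chi$ over $p$ (so $\chi|_{\hat\rho(\sT)}=\chi_p$ and $\chi(H)=\lambda$) yields $\det(\lambda I-H(p))=\chi_p(P_{\t,<})(\lambda)=0$, because the determinant commutes with the entrywise evaluation $\chi_p$; hence $\lambda\in\mathrm{spec}\,H(p)$. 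For the converse I would produce, for each eigenvalue $\lambda$ of the Hermitian matrix $H(p)$ with unit eigenvector $u$, the state $\chi(b)=\langle u,\sigma_p(b)u\rangle$; since $u$ is a joint eigenvector for the commutative algebra $\sigma_p(\B)=\C[H(p)]$, this $\chi$ is multiplicative, satisfies $\chi(\hat\rho f)=f(p)$ and $\chi(H)=\lambda$, and is therefore a character of $\B$ lying over $p$. As a character over $p$ is determined by $\chi(H)$, distinct eigenvalues give distinct points and $\pi^{-1}(p)$ is exactly the set of distinct eigenvalues of $H(p)$. This already gives the final reformulation: $X$ is the spectrum of the family $H(p)=\chi_p(H)$.

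Ramification and the global quotient then follow from the behaviour of the fiber cardinality. By the definition of a degenerate point $\pi^{-1}(p)$ has $|V|$ elements precisely when $p$ is non--degenerate and fewer when $p$ is degenerate, so the branch locus of $\pi$ is exactly the degenerate set; where the eigenvalues are simple, continuity of simple eigenvalues makes $\pi$ a local homeomorphism, i.e.\ a genuine $|V|$--sheeted cover away from the degenerate locus. To globalize, I would invoke continuity of the ordered spectrum of a continuous family of Hermitian matrices: the functions $\lambda_1(p)\le\dots\le\lambda_{|V|}(p)$ are continuous on all of $T$, so $q:\coprod_{i=1}^{|V|}T\to X$, $(p)_i\mapsto\chi_{p,\lambda_i(p)}$, is a continuous surjection from a compact space onto the Hausdorff space $X$, hence a (closed) quotient map. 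This realizes $X$ as the trivial $|V|$--sheeted cover $\coprod_i T$ modulo the identifications $(p)_i\sim(p)_j\iff\lambda_i(p)=\lambda_j(p)$, which occur exactly over degenerate points and record precisely which eigenvalues of $H$ coincide there. The main obstacle is not the algebra but this last topological step: one must check that the fiberwise bijection between eigenvalues and characters assembles into a single continuous quotient map---equivalently, that the ordered eigenvalue branches are globally continuous and that no spurious characters over $p$ exist beyond those seen by $\sigma_p$---and it is the compactness of $T$ together with the Hausdorffness of the $C^*$--spectrum $X$ that upgrades the pointwise identification of the previous step to the asserted homeomorphism.
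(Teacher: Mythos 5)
Your proposal is correct and follows essentially the same route as the paper, which (citing \cite{kkwk}) rests on the homomorphism $\B\to\sT^n$ sending $H\mapsto\sum_i\lambda_i e_i$ with $\lambda_i$ the ordered eigenvalue functions: your quotient map $q:\coprod_{i=1}^{|V|}T\to X$ built from the continuous ordered branches $\lambda_1(p)\le\dots\le\lambda_{|V|}(p)$ is precisely the Gelfand dual of that key ingredient. Your fiberwise identification via Cayley--Hamilton and vector states, and the compact-to-Hausdorff argument upgrading it to a quotient map, simply fill in the details the paper leaves to the earlier reference.
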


The key ingredient is the image of $H$ under the map $\B\to \sT^n$

\begin{equation}
H\mapsto \sum_i \lambda_i e_i
\end{equation}
where $e_i$ are the idempotents corresponding to the  i--th component and $\lambda_i$  is the $i$--th Eigenvalue.

\subsubsection{Bundles, K-theoretic and Cohomology Valued Charges}

Let $T_{\rm deg}$ be the subset of degenerate points on $T$ and let $X_{\rm deg}=\pi^{-1}(T_{\rm deg})$
be the closed possibly singular locus of $X$. Then the restriction $\pi:X_0:=X\setminus X_{\rm deg}\to T_0:=T\setminus T_{\rm deg}$ is the trivial $k$--fold cover. On this restriction the $C(X\setminus X_{\rm deg})$ contains pairwise orthogonal projections $P_i$ such that $H=\sum_i P_i$. Each of these $P_i$ defines a rank $1$ sub--bundle $L_i$ of the trivial bundle $X_0\times \C$ which is
the Eigenbundle  corresponding to the Eigenvalue $\lambda_i$. The projector or equivalently the bundle $L_i$ defines
an element in $K$--theory $[L_i]\in K(T_0)$. We will continue with the geometric interpretation of line bundles and K--theory here, although in forthcoming analysis we will concentrate on the $C^*$ version of $K$--theory in oder to move to a non--commutative setup.

We call the classes $[L_i]$ the K--theoretic charges and the associated Chern classes
$\beta_i:=c_1(L_i)\in H^2(T_{0})$ the cohomological charges. We also let $C=\bigoplus_i L_i$, and
 $[C]\in K(T_0)$ be its class in $K$--theory. Finally we define the polynomial invariant
 $Q_c(t_i)=\prod_i(1+t_i\beta_i)\in H^{\rm ev}(T_0)[t_i]$.
This class contains all the cohomological information of the $L_i$ and $C$.

\begin{rmk}
If $H(t)$ is not Hermitian, we also need the condition $\pi_1(T_0)=0$  in order for the
characteristic polynomial to be irreducible over $C(T_0)$ which is necessary to define the $P_i$.
\end{rmk}

\begin{rmk}
We assumed that the Hamiltonians are generically non--degenerate. It is sufficient to assume that the ranks of the Eigenbundles are generically constant. In this case, we have vector bundles $V_i$ and total Chern classes $c(V_i)$.
\end{rmk}

\subsection{Berry connection, topological charge and slicing}

One can try to get numerical information about $Q_c$ and the $\beta_i$ by pairing them with
appropriate homology classes. For this it is easier to assume that we are dealing with oriented manifolds. If we furthermore have a differentiable structure, we know that we can evaluate Chern classes
by using Chern--Weil theory.

Of course the charges are trivial if $T_0$ has vanishing second cohomology (e.g. if $T_0$ is 2--connected).
In that case the Chern classes $\beta_i$ vanish and the line bundles $[L_i]$ are trivializable.
This is the case in some examples, notably the honeycomb. The effect is that the line bundles
are trivializable and the associated points of degeneracy are not topologically stable, see \S\ref{stablepar}.

The two--torus or the two--sphere do however have non--vanishing $H^2$ and thus are prime candidates
to detect first Chern classes.

Furthermore if there is a differentiable structure, applying  Chern--Weil Theory to the particular case of a line bundle,  we can evaluate the first Chern class of a line bundle with a connection on a 2--dimensional submanifold by pulling back, i.e.\ restricting, the line bundle to the surface and integrating the curvature form of the connection.

\subsubsection{Berry connection}

Following Berry \cite{Berry} we can use the connection provided by adiabatic transport.
It was Berry's insight that this connection is indeed not always trivial
 and produces the so--called Berry phase as a possible monodromy. In the reinterpretation
of Simon \cite{simon} this connection computes exactly the first Chern class of the line bundle $L_i$,
which is the only obstruction for $L_i$ and hence the monodromy to be trivial.

\subsection{Topological charges}
There are several ways to get a scalar charge which one can exploit. Since the Chern classes have
even degree, they will always produce zero when paired with odd dimensional homology classes.
Thus we have to ensure that we use even dimensional cycles to integrate over. (Here integration means pairing with the fundamental class).

Assume $T$ is compact orientable potentially with boundary and that $T_{\rm deg}$ is in codimension
at least $1$; i.e.\ $T$ is generically non--degenerate.
We furthermore assume that $\Tdeg\cap \del T=\emptyset$. Then $T_0$ is an orientable
manifold with boundary.
Let $N$ be a tubular neighborhood of $T_{\rm deg}$ in $T$.
Then $B=T\setminus N$ is a compact sub--manifold with boundary
$\del B=\del T\amalg \del \bar N$ where $\del \bar N=\bar N\setminus N$.
If $B$ has 2nd cohomology, we can pair the $\beta_i$ and $Q_c$ with suitable homology classes.

Assume that $T_{\rm deg}$ is a manifold with singularities.
If the smooth part of $T_{\rm deg}$ is of codimension $r$ then $\del \bar N$ is an $S^{r-1}$
bundle over the smooth part of $T_{\rm deg}$.

\subsubsection{Even dimensional $T$}

If $B$ is even dimensional, we can integrate over  $B$ itself  and consider $Q_B=\int_B Q_c(t_i)$,
the full $B$ charge.

If in particular $B$  is two--dimensional, we obtain all the individual charges $Q_i:=\int_B\beta_i$.

Following Simon \cite{simon} this if for instance the case for the quantum Hall effect.
Here $T=T^2$ has no degenerate locus and we have that $B=T$ can carry non--trivial line bundles.
Indeed the arguments of TKNN \cite{TKNN} establish the non--triviality of the corresponding line bundle.

\subsubsection{Odd dimensional $T$ with boundary}
If $T_0$ is odd dimensional, we can restrict the $L_i$ to
the boundary of $\del T$. Then the {\em boundary charge} is $\int_{\del T} Q_c(t)|_{\del T_0}$.

In the differentiable case, we represent $Q_c$ by a closed form $\omega=d\phi$; strictly speaking this is a polynomial form. Then since $B$ is odd dimensional, we have by Stokes' Theorem that
$0=\int_B \omega=\int_{\del B}\phi=\int_{\del T} \phi +\int_{\del \bar N}\phi$.
\begin{equation}
\label{localglobaleq}
\int_{\del T} \phi=-\int_{\del \bar N}\phi=\int_{-\del \bar N}\phi
\end{equation}
where $-{\del \bar N}$ has the outward orientation  viewed from $\bar N$.
Else we just use the usual pairing between the corresponding homologies and cohomologies.

If the boundary is empty, then we have that $\int_{\del \bar N}\phi=0$.

\subsubsection{Codimension 3 and local charges}
If the smooth part $T^{\rm sm}_{\rm deg}$ of $T_{\rm deg}$ is of codimension $3$ then we can restrict the $L_i$ to the fiber
$S^2=S^2(p)$ over any point $p$ of $T^{\rm sm}_{\rm deg}$. We call $\int_{S^2(p)}L_i|_{S^2}(p)$
the $i$--th local charge at $p$ and $\int_{S^2(p)}Q_c|_{S^2}(p)$ the total local charge.

\subsubsection{Isolated critical points in dimension $3$}
For {\em isolated} critical points of $T_{\rm deg}$  the local charges are just given by integrating over small spheres around these points.
If $T_{\rm deg}$ consists only of isolated critical points, then formula (\ref{localglobaleq})
states that the boundary charge is the sum over the local charges.
If moreover the boundary is empty, this means that the sum of all the local charges is $0$.
This is the case for the gyroid.

\subsection{Probing with surfaces}
In order to detect the $K$--theoretic charges, we can send them to cohomology using the Chern classes and then detect them by using embedded surfaces.
Explicitly, if $\Sigma$ is an oriented compact surface and $i:\Sigma\to T$ is an embedding, then

\begin{equation}
\label{surfacechargeeq}
Q_{\Sigma,i}:=\int_{\Sigma} i^*c_1(L_i)=\langle c_1(L_i), i_*([\Sigma])\rangle
\end{equation}
where $\langle \;,\;\rangle$ is the standard pairing between cohomology and homology.
Notice that by the results of Thom \cite{Thom} all second  homology classes are of this type even over $\Z$. In general, one has to take at least $\Q$ coefficients to ensure that all these integrals determine the cohomology class uniquely.

\subsubsection{Slicing}
A slicing  for $T$ is a smooth codimension $1$ foliation by compact oriented manifolds of $T$ which
has a global transverse section $S$ and the leaves of the foliation generically do not intersect $T_{\rm deg}$. For this we need the Euler characteristic to be $0$, which is in particular the case
for all odd dimensional compact manifolds.

For $s\in S$ let $T_s$ be the leaf of $s$ and $i_s$ be the inclusion,
we can consider the pullback of $C$ and consider
\begin{equation}
Q_s:=\int_{T_s} i^*C
\end{equation}
which is the total Chern class of the slice.
An interesting situation arises if
\begin{enumerate}
\item $T_s$ generically does not intersect $T_{\rm deg}$
\item Any component of $T_{\rm deg}$ is contained between some pair of slices. That is for a component $T'\subset T_{\rm deg}$ there are $s_1,s_2$ and an n--dimensional submanifold $M_{T}'$
    of $T$ with boundaries, such that $M_{T'}\cap T_{\rm deg}=T_0$, and $\del M_{T'}\cap T_{\rm deg}=\emptyset$ and $\del M_{T'}=T_{s_1}-T_{s_2}$.
\end{enumerate}

In this case, by using Stokes' Theorem we get that the total contribution of $T'$

\begin{equation}
\int_{N\cap M_{\t'}}Q_s=Q_{s_1}-Q_{s_2}
\end{equation}

Now (\ref{sliceeq}) is a great tool to numerically find $T_{\rm deg}$.
For this one just runs through the $s\in S$ and looks for jumps in $Q_s$.

\subsubsection{$T_{deg}$ of  codimension $3$}

If $T'$ is smooth
then the total charge is
\begin{equation}
\label{sliceeq}
\int_{Tdeg}(\int_{S^2(p)}C|_{S^2}(p))dp=Q_{s_1}-Q_{s_2}
\end{equation}

If we are in dimension $3$ then codimension $3$ means that the degenerate locus consists of only
isolated critical points. Here the equation (\ref{sliceeq}) simplifies to just a finite sum over the critical points.

If furthermore the critical points are $A_1$ singularities, see \S\ref{swallowpar}, then the jumps in the charge
are from $\pm1$ to $\mp 1$, as calculated in \cite{simon,grove} depending on if one calculates for the upper or lower band and the chosen orientation/parameterization.

\subsubsection{3--dimensional torus models}
If we have that $T=T^3$ the situation is especially nice. It is fibred by $T^2$s in  any sprojections $S^1\times S^1\times S^1\to S^1$.
The inclusion of fibers, say in the three coordinate projections, actually generates the whole cohomology of $T^3$ which has non--vanishing 2nd cohomology $H^2(T^3)\simeq \Z^3$.
 In contrast to the two--torus where puncturing kills the 2nd cohomology a punctured three torus actually still has second cohomology. It is given explicitly in the proof of the theorem below. This is a main difference between graphene and the gyroid,  see below. One has to  be sure however, that the condition of generically not intersecting the degenerate locus is not violated. This is for instance the case for the $D$--surface, see below.

\begin{thm}
For a smooth variation with base $T^3$ with and only  finitely many degenerate points, the slicing method corresponding to a generic projection completely determines the $K$--theoretic charges and hence the line bundles $L_i$ up to isomorphism.
\end{thm}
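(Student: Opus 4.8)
The plan is to reduce the statement to a cohomological computation on the punctured torus and then to show that the numbers produced by the slicing method are exactly the pairings of $c_1(L_i)$ against a generating set of $H_2(T_0)$. First I would recall that complex line bundles on any space are classified up to isomorphism by $c_1\in H^2(\,\cdot\,;\Z)$, and that since each $L_i$ is a line bundle its class $[L_i]\in K(T_0)$ is determined by $c_1(L_i)$; thus ``determining the $K$--theoretic charge'' and ``determining $L_i$ up to isomorphism'' are the same problem, and it suffices to show that the slicing data determine the homomorphism $\langle c_1(L_i),-\rangle$ on all of $H_2(T_0;\Z)$. Since $H_1(T_0)=H_1(T^3)=\Z^3$ is free and (as computed below) $H_2(T_0)$ is free, the universal coefficient theorem gives $H^2(T_0;\Z)=\mathrm{Hom}(H_2(T_0),\Z)$, so $c_1(L_i)$ is pinned down by its pairings against any generating set of $H_2(T_0)$.

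Second, I would compute $H_2(T_0;\Z)$ explicitly --- this is the description promised in the statement. Writing $\Tdeg=\{p_1,\dots,p_m\}$ and excising small balls $B_j\ni p_j$, excision gives $H_k(T^3,T_0)\cong\bigoplus_j H_k(\R^3,\R^3\setminus 0)$, which is $\Z^m$ for $k=3$ and $0$ for $k=2$. Using $H_3(T_0)=0$ (an open connected $3$--manifold) and $H_2(T^3,T_0)=0$, the long exact sequence of the pair reduces to
\[
0\to H_3(T^3)\stackrel{j_*}{\to} H_3(T^3,T_0)\stackrel{\partial}{\to} H_2(T_0)\stackrel{i_*}{\to} H_2(T^3)\to 0,
\]
that is $0\to\Z\stackrel{(1,\dots,1)}{\to}\Z^m\stackrel{\partial}{\to}H_2(T_0)\stackrel{i_*}{\to}\Z^3\to 0$. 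Hence $H_2(T_0;\Z)\cong\Z^{m+2}$, generated by the three coordinate $2$--tori $[T^2_1],[T^2_2],[T^2_3]$ (which $i_*$ carries to a basis of $H_2(T^3)\cong\Z^3$) together with the small spheres $[S^2(p_j)]=\partial(e_j)$, subject to the single relation $\sum_j[S^2(p_j)]=0$ coming from $\partial\circ j_*=0$. As the $e_j$ generate $\Z^m$ and the coordinate tori split $i_*$, these classes generate $H_2(T_0)$ over $\Z$.

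Third, I would interpret the slicing numbers $Q_s$. For a coordinate projection $\pi_k\colon T^3\to S^1$ the fibre $T_s$ is a coordinate $2$--torus, so $[T_s]\in H_2(T_0)$ is locally constant in $s$ and its value in a chamber pairs with $c_1(L_i)$ to give the global Chern number $\langle c_1(L_i),[T^2_k]\rangle$; running over the three coordinate projections recovers all three global charges, since their fibres form a basis of $H_2(T^3)$. As $s$ crosses a critical value $\pi_k(p_j)$, the slab between two nearby leaves with a ball about $p_j$ removed exhibits $[T_{s^+}]-[T_{s^-}]=\pm[S^2(p_j)]$ in $H_2(T_0)$; hence the jump $Q_{s^+}-Q_{s^-}=\pm\langle c_1(L_i),[S^2(p_j)]\rangle$ is exactly the local (monopole) charge at $p_j$, the signs being the $\pm1\to\mp1$ transitions recorded earlier for $A_1$ points. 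Genericity (possible since $\chi(T^3)=0$) guarantees that the leaves avoid $\Tdeg$ and that the $\pi_k(p_j)$ are distinct, so the jumps are separated and each local charge is read off individually. Combining the baselines (global charges) with the jumps (local charges) evaluates $\langle c_1(L_i),-\rangle$ on the full generating set, hence determines $c_1(L_i)$, and therefore $[L_i]$ and $L_i$ up to isomorphism.

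The main obstacle is twofold. The delicate quantitative point is the precise identification of each jump with the intrinsic monopole charge $\langle c_1(L_i),[S^2(p_j)]\rangle$, including orientation conventions; this requires a careful relative/tubular--neighbourhood argument at each puncture rather than the schematic slab picture above, and it is where the bookkeeping of signs (upper versus lower band) must be made honest. The conceptual point to state carefully is that a single projection detects only the one global charge dual to its fixed fibre class, together with all local charges, and so cannot by itself resolve the rank--three lattice $i_*(H_2(T_0))=H_2(T^3)$; accordingly ``a generic projection'' must be read as the three coordinate projections (more generally, three projections whose fibres span $H_2(T^3)$), exactly as in the discussion preceding the statement, and only their combined slicing data suffice to determine $c_1(L_i)$ completely.
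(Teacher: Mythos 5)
Your proposal is correct, and it is in fact more careful than the paper's own proof at exactly the point you flag as the ``conceptual obstacle.'' The paper argues by choosing one generic projection, building a product CW model of $T^3$ with one slice torus at each intermediate height $t_i$, retracting $T_0$ onto the $2$--skeleton, and then asserting that $H_2(T_0)$ ``is generated by exactly the $m$ two cells,'' so that the $m$ slice pairings determine $c_1(L_i)$ by nondegeneracy of the Poincar\'e pairing. Your long--exact--sequence computation replaces (and corrects) this step: from
\[
0\to H_3(T^3)\to H_3(T^3,T_0)\cong\Z^m\to H_2(T_0)\to H_2(T^3)\to 0
\]
one gets $H_2(T_0)\cong\Z^{m+2}$, whereas the $m$ slice classes of a single projection span only a rank--$m$ subgroup (one fiber class plus the sphere classes modulo $\sum_j[S^2(p_j)]=0$); in the paper's own CW model the missing rank $2$ is visible as the vertical cycles built from the $2m$ cells of type (edge of $T^2$)$\times I_j$, which the quoted sentence overlooks. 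So your point is not merely a reading convention: as literally stated, slicing along one projection determines $\langle c_1(L_i),-\rangle$ only on the fiber class and the local monopole charges, and under--determines $c_1(L_i)\in H^2(T_0)\cong\Z^{m+2}$ by a rank--$2$ quotient. The same overcount appears in the gyroid discussion (``the homology is generated by any four slices''), where $m=4$ but $H_2(T_0)\cong\Z^6$. Your repair --- take three projections whose fibers give a basis of $H_2(T^3)$ (e.g.\ the coordinate ones), observe that their slice classes together with the spheres generate $H_2(T_0)$ because the sequence splits, and conclude via the universal coefficient theorem (using that $H_1(T_0)\cong\Z^3$ is free) and the classification of line bundles by $c_1$ --- is exactly what is needed to make the conclusion ``determines $[L_i]$ and hence $L_i$ up to isomorphism'' valid.

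Beyond this correction, your argument runs parallel to the paper's: the identification of the jumps $Q_{s^+}-Q_{s^-}$ with $\pm\langle c_1(L_i),[S^2(p_j)]\rangle$ via the slab--minus--ball relation $[T_{s^+}]-[T_{s^-}]=\pm[S^2(p_j)]$ is the homological form of the paper's Stokes--theorem bookkeeping in equation (\ref{sliceeq}), and the genericity requirements you impose (leaves avoiding $\Tdeg$, distinct critical values $\pi_k(p_j)$) match the paper's hypotheses. The residual work you honestly defer --- fixing orientation conventions at each puncture so that the jump equals the intrinsic local charge --- is of the same nature as the sign discussion the paper delegates to \cite{simon,grove}, and does not affect the determination of the isomorphism class, since a global sign convention per projection suffices.
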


\begin{proof}
If there are $m$ degenerate points $p_i$ then pick a generic projection and let $z_1,\dots, z_m\in S^1$ be the images of the $p_i$. Let $t_1,\dots, t_m$ be points in between the $z_i$, that is one point per component of $S^1\setminus \{p_i\}$. Consider the CW model of the torus, which has one 2--cell at height $t_i$ and
3--cells in between and $0$ and $1$ cells accordingly. Then $T_0=T\setminus\{p_i\}$ deformation retracts onto the 2--skeleton of this complex.
And the homology $H_2(T_0)$ is generated by exactly the $m$ two cells. Now the slicing method will give the paring with these two cells and
as the Poincar\'e paring is non--degenerate, we  the cohomology class of $c_1(L_i)$ is determined by these numbers and hence the line bundle up to isomorphism.
\end{proof}
Notice that the slicing only gives a finite set of numbers for each Eigenvalue, since the integral over the Chern--class is constant
in the components $S^1\setminus\{p_i\}$.

\subsection{Topological Stability}
\label{stablepar}
Having non--vanishing topological charges produces topological stability. If we perturb the Hamiltonian slightly by adding a small perturbation term $\lambda H_1$ and continuously vary $\lambda$ starting at $0$, then  $T_0$ does not move much ---for instance as a submanifold  of $T\times R$, see \S\ref{swallowpar}. In particular, there will be no new singular points in $T_0$
for small perturbation. The Eigenbundles over $T_0$ also vary continuously and hence so do their Chern classes. Since these are defined over $\Z$ they are actually locally constant, so that all the non--vanishing charges, scalar, K-theoretic or cohomological, must be preserved.

\section{Swallowtails and symmetries}

\subsection{Characteristic map  and Swallowtails}
\label{swallowpar}
In the commutative case, the locus $X_{\rm deg}$ has a nice characterization in terms of singularity theory, \cite{kkwk3}.

The key ingredient is embedding of $X$ into $T\times \R$ and the characteristic map.
Let $P(z,t)=det(zId-H(t))=z^k+b_{k-1}(t)z^{k-1}+\dots +b_0(t)$, let
$P(z-\frac{b_{k-1}}{k},z)=z^k+a_{k-2}(t)z^{k-2}+\dots +a_0(t)$ and let $g$ be the isomorphism
on $T\times \R$ which sends $(t,z)$ to $(t,z-\frac{b_{k-1}}{k})$.
The coefficients $a_{k-2}(t),\dots, a_{0}(t)$ define a map $\Xi:T\to \C^{k-1}$ called
the characteristic map. Identifying  $\C^{k-1}$ with the base of the miniversal unfolding
of the $A_{k-1}$ singularity, we obtain the following generalization of \cite{kkwk3}:

\begin{thm}
 \label{mainthm}
 The branched
cover $X\to B$ is equivalent via $g$ to the pull back of
 the miniversal unfolding of the $A_{k-1}$
 singularity along the characteristic map $\Xi$.

Moreover is the family of Hamiltonians is traceless, which is for example
the case if $\bG$ has no small loops ---that is edges which are a loop at one vertex---, the cover
is the pull--back on the nose.

Furthermore, if the graph is also simply laced, then $a_{k-1}=|E(\bG)|$ and the image of $T$ is contained
in that slice.
\end{thm}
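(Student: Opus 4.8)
The plan is to read the three clauses as, respectively, an identification of branched covers and two trace computations; only the first carries real content. First I would invoke the theorem of \S\ref{commutative}, which exhibits $X$ as the spectrum of the family $t\mapsto H(t)=\chi_t(H)$; under the embedding $X\hookrightarrow T\times\R$ this says precisely that $X$ is the spectral cover $\{(t,z):P(z,t)=0\}$, with projection to the base the branched cover at issue. On the singularity side, I would recall that the miniversal unfolding of the $A_{k-1}$ singularity $z\mapsto z^k$ has base $\C^{k-1}$ with coordinates $(a_0,\dots,a_{k-2})$ and total space the universal family of monic degree-$k$ polynomials with vanishing $z^{k-1}$-coefficient, $\mathcal X_{A_{k-1}}=\{(a,z):z^k+a_{k-2}z^{k-2}+\dots+a_0=0\}\to\C^{k-1}$, whose branch locus is the discriminant $\Delta$. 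The first clause then asserts exactly that $g(X)$ is the fiber product $T\times_{\Xi,\C^{k-1}}\mathcal X_{A_{k-1}}$.

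To verify this I would simply apply $g$. Since $b_{k-1}=-\mathrm{tr}\,H$ lies in $\sT=C(T)$, the map $g(t,z)=(t,z-b_{k-1}(t)/k)$ is a genuine fiber-preserving automorphism of $T\times\R$, and $g(X)=\{(t,w):P(w+b_{k-1}(t)/k,t)=0\}$. The Tschirnhaus shift annihilates the subleading term, so the fiber over $t$ is cut out by $w^k+a_{k-2}(t)w^{k-2}+\dots+a_0(t)$; reading $\Xi(t)=(a_0(t),\dots,a_{k-2}(t))$ off the coefficients identifies this locus with the pullback of $\mathcal X_{A_{k-1}}$ along $\Xi$, and its ramification with $\Xi^{-1}(\Delta)=\Tdeg$, matching the ramification statement of \S\ref{commutative}. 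I stress that no versality or transversality of $\Xi$ is needed: the claim is an equality of covers, so once $X$ is identified with the spectral cover this step is bookkeeping.

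For the remaining clauses I would pass to the spanning-tree matrix model $H_{\tau,<}\in M_{|V|}(\sT)$ of \S\ref{sptreesec}. Here $b_{k-1}=-\mathrm{tr}\,H(t)$, and the diagonal entry at a vertex $v$ collects only the contributions $\rho(g_{\vec e})$ of edges that are loops at $v$, since a non-loop edge yields an off-diagonal entry; hence if $\bG$ has no small loops the family is traceless, $b_{k-1}\equiv0$, $g=\mathrm{id}$, and $X=g(X)$ is the pullback on the nose. For the last clause I would compute $\mathrm{tr}\,H(t)^2=\sum_i\lambda_i(t)^2$ through the diagonal of $H^2$: at $v$ it equals $\sum_u H_{vu}H_{uv}$, and the relation $g_{\cev e}=g_{\vec e}^{-1}$ gives $H_{vu}H_{uv}=\rho(g_{\vec e})\rho(g_{\cev e})=\rho(g_{\vec e}g_{\vec e}^{-1})=\mathrm{id}$, so each incident edge contributes $1$. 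The dangerous terms $\rho(g_{\vec e})\rho(g_{\cev{e'}})$ with $e\neq e'$ running between the same pair of vertices are precisely those excluded by the simply-laced hypothesis (no parallel edges), while loops are excluded by the previous clause; thus $(H^2)_{vv}=\deg(v)$ and $\mathrm{tr}\,H^2=\sum_v\deg(v)=2|E(\bG)|$, a constant. Hence the top nonconstant coefficient of the depressed polynomial equals $-\frac{1}{2}\mathrm{tr}\,H^2=-|E(\bG)|$ (equivalently $|E(\bG)|$ up to the chosen sign convention), independent of $t$, so $\Xi(T)$ lies in the stated affine slice.

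The step I expect to be the main obstacle is making the first identification airtight rather than merely formal. One must check that $g$ is an isomorphism of branched covers over the \emph{entire} base, including over $\Tdeg$ where fibers collapse and the scheme structure matters, and one must fix the meaning of \emph{pullback of the miniversal unfolding} as the fiber product with the universal family over the normalized base $\{a_{k-1}=0\}$ rather than over all of $\C^{k}$. Only with this convention does the discriminant $\Delta$ of $A_{k-1}$---and hence the swallowtail stratification exploited later---pull back correctly to $\Tdeg$.
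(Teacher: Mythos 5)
Your proposal is correct and follows essentially the same route as the paper's (i.e.\ \cite{kkwk3}'s) argument: identify $X$ with the zero locus of the characteristic polynomial $P(z,t)$ in $T\times \R$, apply the Tschirnhaus shift $g$ so that the fibers are cut out by the depressed polynomial $z^k+a_{k-2}(t)z^{k-2}+\dots+a_0(t)$, i.e.\ by the pullback along $\Xi$ of the miniversal unfolding of $A_{k-1}$, and settle the two trace clauses by the diagonal computations $\mathrm{tr}\,H=0$ (no small loops) and $\mathrm{tr}\,H^2=2|E(\bG)|$ (simply laced, using $\rho(g_{\vec{e}})\rho(g_{\cev{e}})=\mathrm{id}$). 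Your conclusion $a_{k-2}=-|E(\bG)|$ also correctly repairs the statement's misprint ``$a_{k-1}=|E(\bG)|$'', since the depressed polynomial has no $z^{k-1}$ term and the sign is fixed by $a_{k-2}=-\tfrac{1}{2}\mathrm{tr}\,H^2$.
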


This means that if $\Sigma\subset \C^{k-1}$ is the discriminant locus or swallowtail,
then $T_{\deg}=g^{-1}(\Xi^{-1}(\Sigma))$ and the fiber of $\pi$ over a point $t$ is
exactly $g^{-1}\pi_A^{-1}(\Xi(t))$ where $\pi_A$ is the projection of the miniversal unfolding.
In other words the fibers over degenerate points are identified with the corresponding fibers
over their image points in the swallowtail.

Using Grothendieck's characterization \cite{grothendieck} of the swallowtail as stratified  by lower order
singularities obtained by deleting edges in the corresponding Dynkin diagram, we obtain:

\begin{cor}
The only possible types of singularities for $(\bG,\rho)$ with traceless Hamiltonians in the spectrum are $(A_{r_1}, \dots, A_{r_s})$ with $\sum r_i\leq k-s$.
\end{cor}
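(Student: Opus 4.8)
The plan is to read the corollary off Theorem~\ref{mainthm} together with Grothendieck's stratification of the $A_{k-1}$ swallowtail, so that almost no new work is required. By the main theorem, under the traceless hypothesis the isomorphism $g$ is the identity and the branched cover $X\to B$ is \emph{on the nose} the pullback along $\Xi$ of the miniversal unfolding of the $A_{k-1}$ singularity. Consequently the fiber of $\pi$ over any degenerate point $t$ is isomorphic to the fiber of the miniversal unfolding over $\Xi(t)\in\C^{k-1}$. Hence the only singularity types that can occur in the spectrum are those realized by fibers of the miniversal $A_{k-1}$ unfolding, i.e.\ the types indexing the strata of the discriminant $\Sigma$. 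So the first step is simply to enumerate these strata.

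To enumerate them, I recall that a point of the base $\C^{k-1}$ corresponds---via the traceless normalization, which is exactly what places us in $\C^{k-1}$ rather than $\C^k$---to a monic polynomial $z^k+a_{k-2}z^{k-2}+\dots+a_0$ with vanishing subleading coefficient. Its fiber singularity is governed by the multiplicity partition of its roots: if the roots have multiplicities $m_1,\dots,m_p$ with $\sum_i m_i=k$, the singularity type is $\bigsqcup_i A_{m_i-1}$, where the factors with $m_i=1$ are smooth and contribute nothing. This is precisely Grothendieck's description of the swallowtail as stratified by the subdiagrams of the $A_{k-1}$ Dynkin diagram obtained by deleting nodes: deleting nodes from the path $A_{k-1}$ produces a disjoint union $A_{r_1}\sqcup\dots\sqcup A_{r_s}$.

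The second step is the bookkeeping that produces the inequality. Retaining only the roots of multiplicity $\geq 2$ and setting $r_j:=m_{i_j}-1\geq 1$, the nontrivial singularity type is $(A_{r_1},\dots,A_{r_s})$. Since these multiplicities sit inside the full partition summing to $k$, I obtain
\begin{equation}
\sum_{j=1}^{s}(r_j+1)=\sum_{j=1}^{s}m_{i_j}\leq \sum_{i=1}^{p}m_i=k,
\end{equation}
and therefore $\sum_j r_j\leq k-s$, the claimed bound. Equivalently, in Dynkin language, splitting the path on $k-1$ nodes into $s$ connected pieces $A_{r_1},\dots,A_{r_s}$ requires at least $s-1$ deleted separating nodes, so $\sum_j r_j+(s-1)\leq k-1$, which is the same inequality.

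The only point requiring care---and the main, if mild, obstacle---is keeping straight the translation dictionary between the three equivalent descriptions: the root-multiplicity partition, the deleted-node Dynkin subdiagram, and the off-by-one shift sending a root of multiplicity $m$ to the singularity $A_{m-1}$. I would invoke the traceless hypothesis precisely where it is needed, namely to land in the $(k-1)$-dimensional miniversal base $\C^{k-1}$ so that Theorem~\ref{mainthm} applies on the nose; this normalization is also exactly what makes the bound $k-s$ rather than $k-s+1$. With that dictionary fixed, the corollary is immediate.
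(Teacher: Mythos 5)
Your proposal is correct and follows essentially the same route as the paper, which likewise obtains the corollary directly from Theorem~\ref{mainthm} combined with Grothendieck's stratification of the $A_{k-1}$ discriminant by the lower-order singularities indexed by subdiagrams of the Dynkin diagram; your root-multiplicity bookkeeping $\sum_j(r_j+1)=\sum_j m_{i_j}\leq k$ simply makes explicit the arithmetic the paper leaves implicit. One small caveat: the tracelessness hypothesis only trivializes the shift $g$ (so the pullback is ``on the nose''), while the bound $\sum r_i\leq k-s$ already follows from the multiplicities summing to $k$ for any monic degree-$k$ characteristic polynomial, so your side remark that this normalization is what turns $k-s+1$ into $k-s$ is inaccurate but harmless.
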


\begin{rmk}
Theorem \ref{mainthm} and the corollary above can be viewed as a more precise statement of what is commonly referred to as the von Neumann--Wigner theorem. Namely the expectation that the degenerate locus is of codimension $3$.  This is the case for the full family of Hermitian Hamiltonians as shown in \cite{vNW}. In general the exact codimension depends on the whole family $T$ and is given precisely
as the preimage of $\Xi$. To be more precise locally it is the dimension of the intersection of the image under $\Xi$ with the swallowtail and the dimension of the fiber.
\end{rmk}

\begin{prop}
In the maximal toric case increasing the number of links to arbitrarily high values, the codimension of the degenerate locus $T_{deg}$ generically becomes $-\chi(\Gamma)$, so that the stable expected codimension of the critical locus is $1$.
\end{prop}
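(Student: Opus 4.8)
The plan is to compute the codimension of $\Tdeg$ directly from the characteristic map of Theorem \ref{mainthm} and then to show that the number of links controls the genericity needed for the expected-codimension count to be sharp. First I would pin down the relevant dimensions: in the maximal toric case $\sT=\TTheta$ specializes at $\Theta=0$ to the functions on $T=T^{b_1}$, where $b_1=1-\chi(\bG)=|E|-|V|+1$ is the first Betti number of the quotient graph, so that $\dim T=1-\chi$ while the matrix size stays $k=|V|$. By Theorem \ref{mainthm} the branched cover is, via $g$, the pullback of the miniversal $A_{k-1}$ unfolding along $\Xi:T\to\C^{k-1}$, whence $\Tdeg=g^{-1}(\Xi^{-1}(\Sigma))$ with $\Sigma$ the swallowtail. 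The whole computation thus reduces to understanding how the image of $\Xi$ meets $\Sigma$.

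Second, I would isolate the generic stratum. The top ($A_1$) stratum of $\Sigma$, where exactly two eigenvalues collide, is cut out in the unfolding base by the single equation $\mathrm{disc}(P)=0$, so it has codimension $1$; by the corollary above the deeper $A_{r}$ loci with $r\ge 2$ and the multi-node loci sit in strictly higher codimension. Hence, provided $\Xi$ is transverse to the smooth part $\Sigma^{\mathrm{sm}}$ on a dense open set and meets the deeper strata only in the expected higher codimension, $\Tdeg$ is generically a hypersurface: its stable expected codimension is $1$, and correspondingly $\dim\Tdeg=\dim T-1=-\chi$. This is the assertion to be proved, the point being that the link-rich toric family is generic enough for the expected count to be attained.

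Third, and this is the crux, I would establish this genericity as the number of links grows. In toric coordinates $z_j=e^{it_j}$ the coefficients $a_i(t)$ are Laurent polynomials whose monomials are indexed by the cycles $g_{\vec e}$ of the non-tree edges, so $\mathrm{disc}(P)$ is itself a Laurent polynomial $\Delta(z)$ on $T^{b_1}$. Each added link either enlarges an off-diagonal entry of $H(t)$ by a new phase or opens a new entry, contributing an independent torus variable; I would show that for large link number these contributions enlarge the Newton polytope of $\Delta$ to full dimension and supply enough independent directions that $d\Xi$ is generically surjective onto the normal space of $\Sigma^{\mathrm{sm}}$. The cleanest route is a Sard--Bertini argument across the link data: the family of characteristic maps obtained by varying the phases contains submersions onto a neighbourhood of $\Sigma^{\mathrm{sm}}$, so transversality, and with it the codimension-$1$ conclusion, holds on a dense open set.

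The main obstacle is precisely this last step, because the entries of $H(t)$ are not free functions but constrained sums of unit phases dictated by the combinatorics of $\bG$; one must rule out the systematic coincidences that artificially raise the codimension. The paradigm to beat is the honeycomb, where a bipartite structure forces the diagonal to vanish and thereby imposes extra conditions --- the origin of its non-generic, unstable degeneracies --- and, more sharply, the Hermitian constraint, under which $\Delta$ becomes a sum of squares and the von Neumann--Wigner codimension $3$ reappears. I expect the heart of the argument to be a combinatorial lemma showing that once the link number is large enough no such symmetry can persist generically, phrased cleanly as the statement that the support of $\Delta(z)$ is neither confined to a proper sublattice nor to a sign-definite cone. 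Granting this, the dimension count of the second step gives $\dim\Tdeg=-\chi$ and the stable expected codimension $1$.
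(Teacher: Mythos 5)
Your first two steps track the paper: you reduce via Theorem \ref{mainthm} to the characteristic map $\Xi$ with $\dim T=1-\chi(\bG)$ and aim to get codimension $1$ from the codimension of the discriminant $\Sigma$. But your central mechanism --- transversality of $\Xi$ to the smooth stratum of $\Sigma$, to be produced by a Sard--Bertini argument over the link data --- fails in principle, not merely up to combinatorial coincidences. The Hamiltonians are Hermitian, so the characteristic polynomial has only real roots and $\mathrm{disc}\circ\Xi$ is a \emph{nonnegative} function on $T$ vanishing exactly on $\Tdeg$. A nonnegative smooth function has vanishing differential along its zero set; hence at every point of $\Tdeg$ the normal derivative to $\Sigma$ vanishes and $\Xi$ is never transverse to $\Sigma$ along $\Tdeg$ --- for any graph and any number of links, since the image of $\Xi$ is confined to the closed region where all roots are real and can only touch $\Sigma$ from one side. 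No enlargement of the Newton polytope of $\Delta(z)$ and no genericity of phases changes this: the sign-definiteness you flag as the obstacle to be beaten by a combinatorial lemma is forced identically by Hermiticity (locally the discriminant is a sum of up to three squares --- this is exactly the von Neumann--Wigner mechanism), so the proposed lemma that the support of $\Delta$ avoids sign-definite behavior is false in the only case that matters, and the family you would apply Sard--Bertini to contains no transverse member.

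The paper's proof runs on a different, one-sided mechanism that your proposal is missing. Since $T$ is compact, the image of $\Xi$ is compact, and the moduli of the matrix entries are \emph{sharply} bounded by the edge multiplicities; increasing the number of links grows these sharp bounds, so the image fills out a bounded full-dimensional region of the component of the complement of $\Sigma$ on which the discriminant is positive. The relevant part of $\Tdeg$ is then the preimage of the \emph{boundary} of this region, which is a codimension-$1$ piece of $\Sigma$ of dimension $|V|-2$; this is an extremal (fold-type tangency) statement, not a transversality statement. The count is then the dimension of that boundary piece, $|V|-2$, plus the generic fiber dimension $\dim T-(|V|-1)$ of $\Xi$ over it, totalling $-\chi(\bG)$, i.e.\ codimension $1$. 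So the ingredients your argument would need --- compactness of the image, sharpness and growth of the entry bounds with the number of links, the image's boundary lying on $\Sigma$, and the fiber-dimension count over that boundary --- are exactly the content of the paper's proof, and the transverse-crossing picture around which you organize the argument cannot be repaired within the Hermitian class.
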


\begin{proof}
Since the domain of $\Xi$ is compact, so is the image. Its size is limited by the coefficients of
the Hamiltonian. The value of  $i,j$--th entry under $\hat\chi$ is sharply bounded by $l$ where $l$ is the number of edges between $v_i$ and $v_j$. As the number of edges grows this bound increases. This implies that the sharp bound on the coefficients $a_i$ also increases.
 If this is large enough, the image of $\Xi$ will fill out a bounded region of the complement of the swallowtail $\Sigma$ over which the discriminant is positive. Then the boundary of the image given by a part of the swallowtail $\Sigma$ will be of codimension $1$ and of dimension $|V_{\Gamma}|-2$.
The generic dimension of the fiber will be $dim(T)-(|V_{\Gamma}|-1)$. In total this gives the dimension of the critical locus as
$1-\chi(\Gamma)-|V_{\Gamma}|+1+|V_{\Gamma}|+2=-\chi(\Gamma)$.
\end{proof}

The test case of the triangular graph has been calculated in \cite{kkwk2} which gives an example of
the phenomenon described above.

\subsection{Characterizing Dirac points}
Physically very interesting singularities of $X$ are conical singularities, which are also called Dirac points. In order to find these singularities, we considered the ambient space $T\times \R$ and the function $P:T\times \R\to \R$.
 As we argued in \cite{kkwk3}, Dirac points in the spectrum are isolated Morse singularities of $P$ with signature $(+,-,\dots,-)$. That argument did not need the specifics of the geometric situation and
 hence generalizes.

 Notice that a necessary condition from the above is that there is an $A_1$ singularity in the fiber.
In addition one needs to check the signature.

\subsection{Symmetries and the re--gauging groupoid \cite{kkwk4}}

\subsubsection{General setup}
Going back to the embedding of $\BTheta$ into $M_k(\TTheta^n)$ the relevant matrix representation depended on the choice of a rooted spanning tree $(\tau, v_0)$ and an order $<$ on the vertices. We will now fix that the first element in that order is given by the root. In \cite{kkwk4} we showed that the re--gauging from $(\tau, <)$ to $(\tau',<')$ is given by conjugation by a unitary matrix $U_{\tau,<}^{\tau',<}$. These matrices are more complicated than just the permutation group and incorporate local gaugings. These are given by  diagonal matrices with invertible
elements in $\TTheta$ indexed by the vertices of the graph.

Moreover in this way, the automorphism group of $\Gamma$ acts by re--gaugings. Namely, if $\phi\in Aut(\Gamma)$ then given $(\tau,<)$, the image of $\tau$, $\phi(\tau)$,  and the push forward of the order, $\phi_*(>)$,  give rise a re--gauging by $U_{\tau,<}^{\phi(\tau),\phi_*(<)}$. Usually this action on a given Hamiltonian is not trivial, due to the fact that $\rho$ need not be trivial.

All these observations directly generalize to the more general case of a
groupoid representation $(\bar \Gamma,\rho)$. In this case $\BTheta$ is replaced by $\sT$.
The arguments of \cite{kkwk4} are not sensitive to the particular structure of $\TTheta$
and hence carry over to the more general situation. We summarize the logical steps here.
\subsubsection{Re--gauging groupoid}
The re--gaugings form a secondary groupoid, the re--gauging groupoid. Its objects are given by  tuples $(\tau,<)$ and between any two objects there is a unique morphism
$((\tau,<),(\tau',<'))$.
There is a morphism $\lambda$ to matrices with coefficients in $\sT$ by sending   $((\tau,<),(\tau',<'))$ to $U_{\tau,<}^{\tau',<}$. This morphism need not be a representation however, since we are
only guaranteed that $\lambda(g_1)\lambda(g_2)\lambda(g_1g_2)^{-1}$ is
 non--commutative 2--cocycle with values in $U(\sT)$, the unitary elements of $\sT$.
  The reason for this is that under the identification given in \S \ref{sptreesec} the re--gauging basically corresponds to an isomorphism of  $\pi_1(\bG,v_0)$ with $\pi_1(\bG,v'_0)$  along a path, $v_0$ and $v_0'$ being the roots of $\tau$ and $\tau'$ respectively. Concatenating the isomorphisms along these paths as above, we end up with an isomorphism under a loop; but this is precisely conjugation with an element of $\pi_1(\bG,v_0)$.
  In the representation, this element becomes an element in $U(\sT)$.

\subsubsection{Projective Groupoid Representations}
In the commutative case the cocycle above gives rise to a central extension by
$U(\sT)$ and the matrices $U_{\tau,<}^{\tau',<'}$ give a representation in $M_k(\sT)$ of the central extension.

Evaluating with a character $\hat \chi$, the extension becomes an extension by $U(1)$ and
the matrices $\hat\chi(U_{\tau,<}^{\tau',<})$ form a projective representation of the groupoid
in $M_k(\C)$.

 \subsubsection{Stabilizer Groups, Lifts, Projective Actions and Group Extensions}

If we have a fixed point, that is a Hamiltonian that is invariant under the action of  non--trivial groupoid elements, then these form a {\em group} of re--gaugings. Technically the representation of  stabilizer subgroupoid factors through the group given by identification of all objects in that groupoid to one point.

In order to find such a stabilizer group, we look for an automorphism of $T$ which compensates the
re--gauging by automorphisms of $\bG$. That is given an automorphism $\phi$ of $\bG$ let $\Phi_{\tau',<'}^{\tau,<}$ be the associated re--gauging. We then look for an automorphism
$\Psi_{\tau',<'}^{\tau,<}$ of $T$ such that
\begin{equation}
\hat\chi_t(\Phi_{\tau',<'}^{\tau,<}(H_{\tau,<}))=\hat\chi_{\Psi_{\tau',<'}^{\tau,<}(t)}(H_{\tau,<})
\end{equation}

This is done for one orbit of $(\tau,<)$ under $Aut(\bG)$. This tool is most effective is the graphs are completely symmetric, like the cases we considered.

If we find such a lift of the automorphism group $Aut(\bG)\to Aut(T)$, then we can look for points
of enhanced symmetry. If $t\in T$ has a non--trivial stabilizer group under this action of $Aut(\bG)$
then the matrix $\hat\chi_t(H_{\tau,<})$ has a non--trivial re--gauging fixed group.
This action by conjugation yields a projective representation of the stabilizer group.

Given such a projective representation, we know that it is a representation of a central $U(1)$
extension of the stabilizer group. If the stabilizer group is finite, we would furthermore like to find
a smaller if possible finite group which already carries the representation. That is an extension of the stabilizer group by a finite group. For this one uses the theory of Schur multipliers.

The upshot is that the isotypical decomposition of the representation has to be commensurate with the Eigenspace decomposition of the Hamiltonian -- for that particular value $t\in T$. Practically this means that on one hand if in the given representation there are irreps of dimension bigger than one, one can infer that there are degeneracies in the spectrum of at least these dimensions.
On the other hand, the one dimensional isotypical components  fix Eigenvectors and hence make it easy to find the Eigenvalues. In general of course one only has to diagonalize the Hamiltonian inside the isotypical summands.

In the geometric examples, we showed in \cite{kkwk4} that all the degeneracies can be explained as being forced by these enhanced re--gauging symmetries.

\section{Results and the conjectured NC/C Duality}

Let us summarize our results for the different quantum wire networks, honeycomb, P, D and G. The basis are the results from \cite{kkwk,kkwk3,kkwk4,kkwk2} and the new analysis for the topological charges.

\subsection{The Honeycomb Lattice}
\subsubsection{The commutative case}
In this case the space $X$ is a double cover of the torus $T^2$ ramified at two points.
These two points are $A_1$ singularities and Dirac points.

$T_0$ is $T^2$ with two points removed, so $H^2(T_0)=0$ and so the all charges vanish and the two Dirac points are in general not topologically stable.

There has been an investigation of
deformation directions which do not destroy these points \cite{Fefferman}.
In our setup this means the following:
the
characteristic map has its image in $[-9,0]$  where the swallowtail for $A_1$ is the point $0$.
One only considers deformations which still have $0$ in the image of the characteristic map.

At the Dirac points there is an enhanced symmetry which is Abelian, so it does not have any higher dimensional irreps, but the isotypical decomposition is fully decomposed and forces the double degeneracy at the Dirac points due form of the Hamiltonian.

\subsubsection{Noncommutative case}
Generically $\BTheta=\TTheta^2$. In order to give the degenerate points, let $-e_1:=(1,0),e_2=\frac{1}{2}(1,\sqrt{3}),e_3:=\frac{1}{2}(1,-\sqrt{3})$ be the lattice vectors and $f_2:=e_2-e_1=\frac{1}{2}(-3,\sqrt{3})$,
$f_3:=e_3-e_1=\frac{1}{2}(3,\sqrt{3})$ the period vectors of the honeycomb. The parameters we need are
\begin{equation}
\label{notationeq}
\theta:=\hat\Theta(f_2,f_3), \quad q:=e^{2\pi i \theta} \quad\text{\it and} \quad
\phi=\hat\Theta(-e_1,e_2), \quad \chi:=e^{i\pi \phi},  \text{\it thus } \quad q=\bar\chi^6
\end{equation}
where $\hat\Theta$ is the quadratic from corresponding to the $B$--field $B=2\pi \hat\Theta$.

\begin{thm}\cite{kkwk}
The algebra $\BTheta$ is the full matrix algebra  of $M_2(\T^2_{\theta})$ except in the following finite list of cases
\begin{enumerate}
\item $q=1$.
\item $q=-1$ and $\chi^4=1$.
\end{enumerate}
\end{thm}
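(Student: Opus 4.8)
The plan is to realise $\BTheta$ as an explicit subalgebra of $M_2(\T^2_\theta)$ and then to decide fullness by a commutant computation, carried out directly when $\theta$ is irrational and fibrewise over the centre when $\theta$ is rational. First I would choose the spanning tree consisting of one of the three edges of the honeycomb quotient graph, together with the given order of its two vertices, so that by the matrix representation theorem $\BTheta$ is generated by $H=\begin{pmatrix}0&h\\h^*&0\end{pmatrix}$, with $h=1+U+V$ and $UV=qVU$, and by the image $\hat\rho(\sT)$ of the torus. The point I would emphasise is that, $\rho$ being projective, the re-gauging isomorphism between the two spaces $\H_v$ along the tree edge is only defined up to a cocycle phase; hence $\hat\rho(\sT)$ acts not as the naive diagonal $\mathrm{diag}(t,t)$ but as $\mathrm{diag}(t,\sigma(t))$, where $\sigma$ is the automorphism of $\T^2_\theta$ multiplying $U$ and $V$ by fixed powers of $\chi$. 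This twist $\sigma$, and not $q$ alone, is what lets the answer depend on $\chi$ rather than only on $\theta$; reading off the $\chi$-exponents of $\sigma$ from $\hat\Theta$ evaluated on the edge and period vectors is the first concrete computation.

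For irrational $\theta$ I would prove fullness by hand. The diagonal part of $\BTheta$ contains $\mathrm{diag}(t,\sigma(t))$ and $H^2=\mathrm{diag}(hh^*,h^*h)$, and a short ideal argument using simplicity of $\T^2_\theta$ shows that the two idempotents $E_{11},E_{22}$ lie in $\BTheta$ unless $\sigma(hh^*)=h^*h$; a direct expansion of $hh^*$ and $h^*h$ in the basis $U^{-1}V,V^{-1}U$ shows this identity forces $q=1$, which is excluded. Once $E_{11},E_{22}\in\BTheta$, the element $E_{11}HE_{22}=hE_{12}$ together with the diagonal torus produces the whole two-sided ideal generated by $h$ inside the $E_{12}$-slot; simplicity makes this ideal all of $\T^2_\theta$, and symmetrising gives every matrix unit, so $\BTheta=M_2(\T^2_\theta)$. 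Thus no irrational $\theta$ is exceptional, consistent with the exceptional loci occurring only at $q=\pm1$.

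For rational $\theta=m/n$ the torus is no longer simple, and I would argue fibrewise: evaluating at a central character $(\zeta,\eta)$ of $Z(\T^2_\theta)\cong C(T^2)$ sends $\BTheta$ into $M_2(M_n(\C))=M_{2n}(\C)$, and by noncommutative Stone--Weierstrass $\BTheta$ is full iff every such evaluation is onto, which by the finite dimensional double commutant theorem holds iff the image has trivial commutant in each fibre. Solving for a commuting $M=\begin{pmatrix}A&B\\C&D\end{pmatrix}$ reduces to: $A,D$ central, $B,C$ intertwiners of $\sigma$ with the identity, and a coupling through $h$. When $q=1$ the algebra is commutative and $\{\mathrm{diag}(a,a)\}$ already lies in the commutant, so $\BTheta$ is never full --- this is case (1). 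When $q=-1$ one has $\langle q\rangle=\{\pm1\}$, the Weyl matrices $\pi(U),\pi(V)$ are Pauli-type, and the commutant of the explicit $4\times4$ generators becomes nontrivial precisely when the $\chi$-phases of $\sigma$ lie in $\{\pm1\}$; with the exponents found in the first step this collapses to $\chi^4=1$, giving case (2). For every other rational $\theta$ I would check that the same computation yields only scalars, so $\BTheta$ is full there.

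The step I expect to be the main obstacle is the $q=-1$ computation: one must determine the $\chi$-exponents of $\sigma$ exactly and then verify that the innerness condition forcing a nonzero intertwiner coincides on the nose with the condition that this intertwiner also commutes with $H$, so that the commutant jumps exactly at $\chi^4=1$ and nowhere else on the circle $q=-1$. A secondary difficulty is the passage from the fibrewise statement to the genuine $C^*$-equality $\BTheta=M_2(\T^2_\theta)$: one has to confirm surjectivity onto every fibre of the continuous field and match the fibre dimension $(2n)^2$, and to treat the finitely many special central characters where $\pi(h)$ or the Weyl pair degenerates.
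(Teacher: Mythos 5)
You should first note that the paper at hand does not prove this theorem: it is quoted from \cite{kkwk}, where it is established by explicit computation with the twisted generators. Your setup is nevertheless the right one, and it matches \cite{kkwk} in its crucial point: with the one--edge spanning tree, $\BTheta$ is generated by the off--diagonal Hamiltonian with entries $h=1+U+V$ and $h^*$, together with the twisted diagonal $\mathrm{diag}(t,\sigma(t))$, where $\sigma$ rescales $U,V$ by powers of $\chi$; recognizing that the torus does not embed as the naive diagonal is exactly why the answer depends on $\chi$ and not only on $q$. Your irrational--$\theta$ argument is essentially correct: a $C^*$--subalgebra of $\sT\oplus\sT$ mapping onto both simple factors is either everything or the graph of an automorphism, the graph is forced to equal $\sigma$ on $U,V$, and comparing coefficients in $\sigma(hh^*)=h^*h$ (using $V^*U=q\,UV^*$ and $U^*V=\bar q\,VU^*$) forces the $\chi$--phases trivial and $q=1$, which is excluded for irrational $\theta$; the matrix units then follow from simplicity as you describe.

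The genuine gap is in the rational case, and it is not the ``secondary difficulty'' you flag but the heart of the matter. Your criterion --- $\BTheta$ is full iff every central--character evaluation is onto, checked by single--fibre commutants --- is false, and the $q=1$ case of the theorem is itself the counterexample. Since $q=\bar\chi^6$, the condition $q=1$ only forces $\chi^6=1$, and the paper states explicitly that for $q=1$ with $\chi\neq\pm1$ one gets \emph{noncommutative} proper subalgebras; so your assertion ``when $q=1$ the algebra is commutative, hence never full'' already fails there. Worse, for $\theta=0$ the twist $\sigma$ induces a nontrivial translation $\tau$ of $T^2$, so the evaluation of $\BTheta$ at a generic point $p$ contains $\mathrm{diag}(t(p),t(\tau p))$ with $t(p),t(\tau p)$ independent scalars, and together with the off--diagonal block $h(p)\neq 0$ this generates all of $M_2(\C)$: every single fibre is hit surjectively with trivial commutant, yet $\BTheta$ is proper, because its elements satisfy gluing constraints along the finite $\tau$--orbits. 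Detecting this requires the two--point (pairwise) Stone--Weierstrass criterion --- surjectivity of $\BTheta\to A_p\oplus A_{p'}$ for all $p\neq p'$ --- not fibrewise irreducibility; the same issue governs $q=-1$, where for $\chi^4\neq1$ the $\sigma$--translation moves the centre of $\T^2_{1/2}$ and you must prove pairwise surjectivity, while for $\chi^4=1$ the obstruction is an orbitwise gluing rather than (only) a fibre commutant. As written, your method would wrongly certify fullness at, say, $q=1$, $\chi=e^{i\pi/3}$, so the rational analysis must be redone with the two--point criterion or, as in \cite{kkwk}, by direct manipulation of the twisted generators uniformly in $\theta$.
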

The precise algebras are given in \cite{kkwk}. We wish to point out that $q=\chi=1$ is the commutative
case and $q=-\chi=1$ is isomorphic to the commutative case, while the other cases give non--commutative proper subalgebras of $M_2(\T^2_{\theta})$.

\subsection{The primitive cubic (P) case, and other Bravais cases}
For the simple cubic lattice and any other Bravais lattice of rank $k$ (P is the rank 3 case):
if $\Theta\neq 0$ then $\BTheta$ is simply the noncommutative torus $\TTheta^k$ and if $\Theta=0$ then
this $\B_0$ is the $C^*$ algebra of $T^k$. There are no degenerate points.

In the commutative case the cover $X\to T^k$ is trivial and so is the line bundle of Eigenvectors.

The analysis of \cite{BE} of the quantum Hall effect however suggests that there is a non--trivial noncommutative line
bundle in the case of $k=2$ for {\em non--zero $B$--field}.
Furthermore, in this case there is a non--trivial bundle, not using the noncommutative geometry, but rather the Eigenfunctions constructed in \cite{TKNN} for the full Hilbert space $\H$. This is what is also considered in \cite{simon}. We will study this phenomenon in the gyroid and the other cases in the future.

\subsection{The Diamond (D) case}

\subsubsection{The commutative case}
In this case, we see that the algebra $\BTheta$ is a subalgebra of $M_2(C(T^3))$, where $C(T^3)$ is the $C^*$ algebra of complex functions on the torus $T^3$.

The space $X$ defined by $\B$ in the commutative case is a generically 2--fold cover of the 3--torus $T^3$ where
 the ramification locus $\Tdeg$ is along three circles on $T^3$ given by the equations $\phi_i=\pi, \phi_j\equiv \phi_k+\pi\; \mbox{mod}\; 2\pi$ with $\{i,j,k\}=\{1,2,3\}$.
 $\Tdeg\Xi^{-1}(0)$ is the inverse image ---of the characteristic map--- of the only singular points (the origin) of the miniversal unfolding of $A_1$. Thus
 the singularities are of type $A_1$ but they are not discrete, but rather
 pulled back to the entire $\Tdeg$, hence there are also no Dirac points.

One can show that $T_0=T^3\setminus\Tdeg$ contracts onto a 1--dimensional CW--complex and hence has $H^2(T_0)=0$. Thus there are no non--vanishing topological charges associated to this geometry and no stability.

Analogous to the honeycomb case there are Abelian enhanced symmetries with 1--dimensional isotypical components, which force the double degeneracy in view of the structure of the Hamiltonian.

\subsubsection{The non--commutative case}

In the non--commutative case, we express our results in terms of parameters $q_i$ and $\xi_i$ defined as follows: Set
$e_1=\frac{1}{4}(1,1,1), e_2=\frac{1}{4}(-1,-1,1), e_3=\frac{1}{4}(-1,1,-1)$ for $B=2\pi\Theta$ let
\begin{equation}
{\Theta} (-e_1,e_2)=\varphi_1\quad
{\Theta} (-e_1,e_3)=\varphi_2\quad
{\Theta} (e_2,e_3)=\varphi_3  \text{ and }
\chi_i=e^{i \varphi_i}\;\mbox{for}\; i=1,2,3
\end{equation}

There are three operators  $U,V,W$, given explicitly in \cite{kkwk2}, which span $\T^3_{\Theta}$ and have commutation relations
\begin{equation}
U V = q_1 V U \quad
UW  = q_2 WU\quad
VW = q_3 WV
\end{equation}
where the $q_i$ expressed  in terms of the $\chi_i$ are:
\begin{equation}
q_1=\bar{\chi_1}^2 \chi_2^2 \chi_3^2 \quad
q_2=\bar{\chi_1}^6 \bar{\chi_2}^2 \bar{\chi_3}^2 \quad
q_3=\bar{\chi_1}^2\bar{ \chi_2}^6 \chi_3^2
\end{equation}
Vice versa, fixing the values of the $q_i$ fixes the $\chi_i$ up to eighth roots of unity:
\begin{equation}
\chi_1^8=\bar{q}_1 \bar{q}_2 \quad
\chi_2^8=q_1 \bar{q}_3  \quad
\chi_3^8=q_1^2\bar{q}_2 {q}_3
\end{equation}
Other useful relations are $ q_2 \bar{q}_3= \bar{\chi}_1^4 \chi_2^4 \bar{\chi}_3^4$ and $q_2 q_3 =\bar{\chi}_1^8 \bar{\chi}_2^8$.
the algebra $\BTheta$ is the {\em full}  matrix algebra {\em except} in the following cases in which it is
a proper subalgebra.

\begin{enumerate}
\item $q_1=q_2=q_3=1$ (the special bosonic cases)  and one of the following is true:

\begin{enumerate}
\item All $\chi_i^2=1$ then $\BTheta$ is isomorphic to the commutative algebra in the case of no magnetic field above.

\item Two of the $\chi_i^4=-1$, the third one necessarily being equal to $1$.\end{enumerate}
\item If $q_i=-1$ (special fermionic cases)  and  $\chi_i^4=1$. This means that either
\begin{enumerate}
\item all $\chi_i^2=-1$ or
\item  only one of the $\chi_i^2=-1$ the other two being $1$.
\end{enumerate}
\item $\bar q_1=q_2=q_3=\bar \chi^4_2$ and $\chi^2_1=1$ it follows that $\chi_2^4=\chi_3^4$. This is a one parameter family.
\item $q_1=q_2=q_3=\bar\chi_1^4$ and $\chi_2^2=1$ it follows that $ \chi_1^4=\bar \chi_3^4$. This is a one parameter family.
\item $q_1=q_2=\bar q_3=\bar \chi_1^4$ and $\chi_1^2=\bar\chi_2^2$. It follows that $\chi_3^4=1$. This is a one parameter family.

\end{enumerate}

\subsection{The Gyroid (G)  case}
\subsubsection{The commutative case}

For the gyroid, the commutative geometry if given by a generically unramified 4-fold cover of the three torus, see \cite{kkwk}.
There are only 4 ramification points.
This means that the locus is of real codimension 3 contrary to the D case where it was of codimension 2.
Furthermore the degenerations are 3 branches coming together at 2 points ---$(0,0,0)$ and $(\pi,\pi,\pi)$---
 and 2 pairs of branches coming
together at the other two points ---$(\frac{\pi}{2},\frac{\pi}{2},\frac{\pi}{2})$ and $(\frac{3\pi}{2},\frac{3\pi}{2},\frac{3\pi}{2})$. The latter furnish double Dirac points.

Using the characteristic map the first type of singular point corresponds to an $A_2$ singularity and the second type corresponds to the type $(A_1,A_1)$ stratum of the swallowtail. All the inverse images have discrete fibers. There are two image points on the $A_2$ stratum each with one inverse image under $\Xi$ and there is one point on the $(A_1,A_1)$ stratum, with two inverse images.

All the $A_1$ singularities in the fibers are Dirac points. That is there are four of these points.
Furthermore at all points there are enhanced symmetries by non--Abelian groups.

At $(0,0,0)$ the enhanced symmetry group is the symmetric group $\SS_4$ ---the full symmetry group of $\bG$ which entirely lifts to $Aut(T^3)$--- yielding one 1--dim irrep and one 3--dim irrep which forces the triple degeneracy. At $(\pi,\pi,\pi)$ we have an {\it a priori} projective
representation of $\SS_4$, which we showed however to be equivalent to the standard representation of $\SS_4$ and hence we again get one 1--dim irrep and one 3--dim irrep which forces the triple degeneracy.
At the other two points things are really interesting. The stabilizer symmetry group is $A_4$ and it yields a projective representation which is carried by the double cover of $A_4$ aka. $2A_4$, 2T, the binary tetrahedral group or $SL(2,3)$. The representation decomposes into two 2--dim irreps forcing the two double degeneracies.

Notice that we essentially need a projective representation, since $A_4$ itself has no 2--dim irreps.

Now $\Tdeg$ is the set of the four points above and $T_0=T^3\setminus \Tdeg$ contracts onto a 2--dim CW complex with non--trivial second homology.

Thus there are $K$--theoretic and cohomological charges. This is the special case of dimension 3 with codimension 3 degenerate points and moreover we have a slicing of $T^3$ by the fiber bundle $T^3\to T^2$ by any of the tree coordinate projections.  In fact the homology is generated by  any four slices which sit in between the 4 slices that contain the degenerate points. Pairing with these surfaces completely determines the Chern class of the line bundles and hence the line bundles up to isomorphism.

The relevant numerics were carries out in \cite{kkwk5}.
In accordance with the analytic calculations of \cite{simon, grove} the Dirac points yield jumps in the charge by $\pm 1$ for the two bands that cross.

A new result is that the $A_2$ points yield jumps by $-2,0,2$ for the three bands that cross.

All these charges are topologically stable. Again an interesting note is that the $A_2$ points
each split into four $A_1$ points in compliance with the jumps given above.

\subsubsection{The non--commutative case}

To state the results of \cite{kkwk} we use the bcc lattice vectors
\begin{equation}
\label{bccveceq2}
g_1=\frac{1}{2}(1,-1,1),\quad g_2=\frac{1}{2}(-1,1,1), \quad g_3=\frac{1}{2}(1,1,-1)\end{equation}

\begin{equation*}
 \theta_{12}=\frac{1}{2\pi}B\cdot (g_1\times g_2), \quad \theta_{13}=\frac{1}{2\pi}
 B\cdot (g_1\times g_3), \quad
\theta_{23}=\frac{1}{2\pi}B\cdot (g_2\times g_3)
\end{equation*}

\begin{equation*}
\alpha_1:=e^{2\pi i \theta_{12}}
\bar \alpha_2:= e^{2\pi i \theta_{13}}
\alpha_3:=e^{2\pi i \theta_{23}}
\end{equation*}

$$\phi_1=e^{\frac{\pi}{2} i \theta_{12}}, \quad \phi_2= e^{\frac{\pi}{2} i \theta_{31}},\quad
\phi_3= e^{\frac{\pi}{2} i \theta_{23}}, \quad \Phi=\phi_1\phi_2\phi_3
$$

{\bf Classification Theorem.}
\begin{enumerate}
\item If $\Phi\neq1$ or $\Phi=1$ and at least one $\alpha_i\neq 1$ and all $\phi_i$ are different then
$\BTheta=M_4(\T^3_{\Theta})$.

\item If $\phi_i=1$ for all $i$ then the algebra is the same as in the commutative case.

\item In all other cases $\B$ is non--commutative and $\BTheta\subsetneq M_4(\T^3_{\Theta})$.
\end{enumerate}

\subsection{Observation and conjecture}
Looking at the cases above, we observe several regularities. First
and foremost, there is agreement on the maximal dimension of the degenerate locus in $T^k$ between the commutative and the non--commutative case.  In the commutative case,
this locus is $\Tdeg$; in the non--commutative case, it is the values of the $B$--field,
which is again parameterized by $T^k$, now via $\Theta$, where the matrix algebra is not
the full matrix algebra.

We conjecture that this is always the case.

There are several possible points of attack here. The first is through the symmetries:
as we have seen, the re--gauging groupoid exists already in the non--commutative case.
Another is to consider how, in the presence of a conserved topological charge, larger
representations, such as $A_2$ in the gyroid case, break into smaller pieces.
Using the slicing method described above, one can readily see how that happens under a
deformation of the Hamiltonian in the commutative case. The question is whether the effect
of non-commutativity is something similar.

\section*{Acknowledgments}
RK thankfully acknowledges
support from NSF DMS-0805881.
BK  thankfully acknowledges support from the  NSF under the grant PHY-0969689.
  Any opinions, findings and conclusions or
recommendations expressed in this
 material are those of the authors and do not necessarily
reflect the views of the National Science Foundation.
The authors also thank the organizers of the conference: ``Noncommutative Algebraic Geometry and its Applications to Physics'' in Leiden for an event that catalyzed  new results and research directions. They also thank M.\ Marcolli for insightful discussions.

Parts of this work were completed when RK was visiting
the IHES in Bures--sur--Yvette, the Max--Planck--Institute in Bonn and the University of Hamburg with a Humboldt fellowship. He gratefully acknowledges
their support.


\begin{thebibliography}{99}

\bibitem{Hillhouse}
 V.N.\ Urade, T.C.~Wei, M.P.~Tate and H.W.~Hillhouse.
{\it Nanofabrication of double-Gyroid thin films.}
Chem. Mat.  19, 4 (2007) 768-777

\bibitem{Khlebnikov&Hillhouse} S. Khlebnikov and H. W. Hillhouse,
{\it Electronic Structure of Double-Gyroid Nanostructured Semiconductors: Perspectives for carrier multiplication solar cells},
Phys. Rev. B {80}, 115316 (2009)

\bibitem{kkwk} R.M. Kaufmann, S.\, Khlebnikov, and B.\ Wehefritz--Kaufmann,
{\it The geometry of the double gyroid wire network: quantum and classical},  Journal of Noncommutative Geometry 6, 623-664 (2012)

\bibitem{B} J.\  Bellissard,
{\it Gap labelling theorems for Schr\"{o}dinger operators},
in: From number theory to physics (1992) 538--630



\bibitem{Connes} A.~Connes, {\it Noncommutative Geometry}.
Academic Press, Inc., San Diego, CA, 1994.

\bibitem{MM} M.\  Marcolli and V.\  Mathai, {\it Towards the fractional quantum Hall effect:
a noncommutative geometry perspective},  in: Noncommutative geometry and number theory: where arithmetic meets geometry, Caterina Consani, Matilde Marcolli (Eds.), Vieweg, Wiesbaden (2006) 235-263

\bibitem{kkwk3} R.M. Kaufmann, S.\, Khlebnikov, and B.\ Wehefritz--Kaufmann,
{\it Singularities, swallowtails and Dirac points.
An analysis for families
of Hamiltonians and applications
to wire networks, especially the Gyroid}. Annals of Physics 327, 2865 (2012)

\bibitem{kkwk4} R.M. Kaufmann, S.\, Khlebnikov, and B.\ Wehefritz--Kaufmann,
{\it Re-gauging groupoid, symmetries and degeneracies for Graph Hamiltonians
and applications to the Gyroid wire network}. DESY preprint 12-133, preprint arXiv:1208.3266, submitted.



\bibitem{Berry}
M. V. Berry, {\it Quantum phase factors accompanying adiabatic changes}, Proc. R. Soc. Lond. A 392, 45--57 (1984)


\bibitem{TKNN} D.J. Thouless, M. Kohmoto, M.P. Nightingale and M. den Nijs, {\it Quantized Hall Conductance in a Two--Dimensional Periodic Potential}, Phys. Rev. Lett. 49, 405--408 (1982)

\bibitem{simon}
B. Simon, {\it Holonomy, The Quantum Adiabatic Theorem, and Berry's Phase}, Phys. Rev. Lett. 51, 2167--2170 (1983)

\bibitem{Xu&al} G. Xu, H. Weng, Z. Wang, X. Dai, and Z. Fang,
{\it Chern Semimetal and the Quantized Anomalous Hall Effect in HgCr$_2$Se$_4$},
Phys. Rev. Lett. 107, 186806 (2011).


\bibitem{kkwk5} R.M. Kaufmann, S.\, Khlebnikov, and B.\ Wehefritz--Kaufmann,
{\it Topologically stable Dirac points in a three-dimensional supercrystal}. In preparation.

\bibitem{Fefferman} C.F. Fefferman, M.I. Weinstein, {\it Honeycomb Lattice Potentials and Dirac Points}, preprint arXiv:1202.3839

\bibitem{BE} J.~Bellissard, A.~van Elst and H.~Schulz-Baldes,
{\it The noncommutative geometry of the quantum Hall effect. Topology and physics},
J.~Math.~Phys.~35 (1994) 5373--5451
\bibitem{Landi} G. De Nittis and G. Landi, {\it Topological aspects of generalized Harper operators}, To appear in: The Eight International Conference on Progress in Theoretical Physics, Mentouri University, Constantine, Algeria, October 2011; Conference proceedings of the AIP, edited by N. Mebarki and J. Mimouni; preprint arXiv:1202.0902v1



\bibitem{Demikhovskii} V.\ Ya.\  Demikhovskii and D.\ V.\ Khomitskiy, {\it Quantum Hall effect in p--type heterojunction with a lateral surface superlattice}, Phys. Rev. B 68 (2003) 165301
\bibitem{Goldman} N. Goldman and P. Gaspard, {\it Quantum graphs and the integer quantum Hall effect}, Phys. Rev B 77 (2008) 024302
\bibitem{Bernevig} B.\ A.\ Bernevig, T.\ L.\ Hughes, S. Raghu and D.\ P.\ Arovas, {\it Theory of the Three--Dimensional Quantum Hall Effect in Graphite}, Phys. Rev. Lett. 99 (2007) 146804
\bibitem{Koshino} M. Koshino, H. Aoki, K. Kuroki, S. Kagoshima and T. Osada, {\it Hofstadter Butterfly and Integer Quantum Hall Effect in Three Dimensions}, Phys. Rev. Lett. 86 (2001) 1062
\bibitem{Goryo} J. Goryo and M. Kohmoto, {\it Berry Phase and Quantized Hall Effect in Three--Dimension}, J. Soc. Jpn. 71 (2002) 1403

\bibitem{Grosse} K. Grosse-Brauckmann and M. Wohlgemuth, {\it The gyroid is embedded and has constant mean curvature companions}, Calc. Var. Partial Differential Equations 4 (1996) 499





\bibitem{Schoen} A.H.~Schoen, {\it Infinite periodic minimal surfaces without self-intersections}, NASA Technical Note No. TN D -5541 (1970)



\bibitem{lambert}C.\ A.\ Lambert, L.\ H.\ Radzilowski, E.\  L.\ Thomas,
{\it Triply Periodic Level Surfaces as Models
for Cubic Tricontinous Block Copolymer Morphologies},
 Philos. Transactions: Mathematical, Physical and Engineering Sciences,
Vol. 354, No 1715, Curved Surfaces in Chemical Structure (1996) 2009--2023


\bibitem{Hajduk} D.~A.~Hajduk et.\ al., {\it The gyroid - a new equilibrium
morphology in weakly segregated diblock copolymers} Macromolecules 27 (1994) 4063--4075.


\bibitem{kkwk2} R.M. Kaufmann, S.\, Khlebnikov, and B.\ Wehefritz--Kaufmann,
{\it The noncommutative geometry of wire networks from triply periodic surfaces},
Journal of Physics: Conf. Ser. 343 (2012), 012054


\bibitem{CastroNeto} A. H. Castro Neto, F. Guinea, N. M. R. Peres, K. S. Novoselov, and A. K. Geim,
{\it The electronic properties of graphene}, Rev. Mod. Phys. 81, 109 (2009)

\bibitem{Harper} P.G.~Harper, {\it Single band motion of conduction electrons in a uniform magnetic field},
Proc. Phys. Soc. London A68 (1955), 874--878



\bibitem{PST}G.\ Panati, H.\ Spohn and S.\ Teufel, {\it Effective dynamics for Bloch electrons: Peierls substitution and beyond},
 Comm. Math. Phys. 242 (2003) 547-578









\bibitem{Thom} R. Thom. {\it
Sous-vari\'{e}t\'{e}s et classes d'homologie des vari\'{e}t\'{e}s diff\'{e}rentiables. II. R\'{e}sultats et applications.}
C. R. Acad. Sci. Paris 236, (1953). 573--575






\bibitem{grove} K. Grove and G. K. Pedersen, {\it Diagonalizing Matrices over C(X)}, J. Funct. Anal. 59, 65--89 (1984)







\bibitem{grothendieck} M. Demazure, {\it Classification des germs \`a point critique isol\'e et \`a nombres de modules 0 ou 1 (d'apr\`es Arnol'd)}. S\'eminaire Bourbaki, 26e ann\'ee, Vol. 1973/74 Exp. No 443, pp. 124-142, Lect. Notes Math. 431. Springer, Berlin 1975

    \bibitem{vNW}
J. von Neumann and E. Wigner, {\it \"{U}ber das Verhalten von Eigenwerten bei adiabatischen Prozessen}, Z. Phys. 30, 467 (1929)



\end{thebibliography}
\end{document}